\documentclass[review]{elsarticle}

\usepackage{hyperref}
\usepackage{amssymb}

\newdefinition{mydef}{Definition}
\newdefinition{myapp}{Approach}
\newtheorem{mytheo}{Theorem}
\newtheorem{myobs}{Observation}
\newtheorem{mycolo}{Corollary}
\newproof{myproof}{Proof sketch}

\newcommand*{\qeds}{\hfill\ensuremath{\square}}

\journal{Theoretical Computer Science}

\begin{document}

\begin{frontmatter}

\title{Towards Solving NP-Complete and Other Hard Problems Efficiently in Practice}
\date{November 2019}

\author{Mircea-Adrian Digulescu}
\address{Independent}
\ead{mircea.digulescu@gmail.com}
\address{Former: Department of Computer Science, Faculty of Mathematics and Computer Science, University of Bucharest Romania}
\ead{mircea.digulescu@my.fmi.unibuc.ro}

\begin{abstract}
Until now, Computer Scientists have concerned themselves with identifying efficient algorithms for solving the general case of some problem - that is finding one which performs well when the size of the input tends to infinity. However, this is the precise opposite of what is actually needed in practice. Effectively solving some real-world problem entails identifying an algorithm which works well for all (or some) inputs up to some fixed upper bound dictated by the concrete practical application. Such an algorithm may be distinct from the one which solves the general case. Furthermore, a general case algorithm may not exist at all or finding it might prove painstakingly hard for the human mind.  Fortunately, in practice all that is needed is one which works on the finite cases involved in the real world situations, not one which can, unaltered, solve any input correctly.

In this paper, we first introduce a theoretical framework for reasoning about finite algorithmics. It allows familiar concepts such as asymptotic complexity to be adapted to the case where the input size is bounded from above. We also present some elementary results within this theory. Secondly, we present a generic approach for automatically discovering an adequate algorithm for the finite case of some hard problem - if one exists. Thirdly, we argue why we expect the finite case of hard problems to be easier than the general case. Fourthly, we present some relevant ideas specific to three hard problems, namely 3CNF-SAT, String Compression and Integer Factorization. Fifthly, we discuss the significance of the theory and methods introduced in this paper - noting among other things that they can be used to automatically determine that either (i) $P = NP$, (ii) $P \neq NP$ or (iii) we don’t really care about the distinction for practical purposes. Finally, we present four directions for immediate further research and formulate an open question which, when answered will, for all practical purposes, decide $P=NP$. 

Enhancing the way Computer Scientists reason about hard problems is ultimately the single most important contribution we claim for this paper.
\end{abstract}

\begin{keyword}
NP-Complete \sep Intractability \sep Incomputability \sep Complexity Theory 
\end{keyword}

\end{frontmatter}

\section{Introduction}\label{Sec1}

Until now, we as Computer Scientists have almost exclusively concerned ourselves with finding algorithms to solve interesting problems in the general case. That is to identify a single algorithm - some fixed finite sequence of lines of code - which solves said problem for any input. We then reason about upper time and space bounds for such an algorithm in terms of asymptotical complexity with regard to the input size in bits (or some general unbounded parameter which describes the difficulty of the input). When we are unable to find an algorithm suitable to our desires, we can reason about the constraints to which such - if it exists - must conform, in terms of lower-bounds. Furthermore, we can go on and analyze the relation between the relative hardness of problems - even of those for which we do not yet have a satisfactory solution - by clustering them into complexity classes and then proceeding to examine the relationships between these. In 2005 there were about 417 complexity classes in the Complexity Zoo \cite{cit01}. As of 2019, the number has grown to about 544 classes currently being investigated by humanity. However, they all pertain to solving some hard problem correctly and efficiently for all inputs, no matter how large.

A large number of problems, including SAT and all NP-Complete ones, integer factorization, solving stochastic games, all PSPACE-Complete problems and all problems in EXPTIME and above, as well as some mysterious ones like breaking AES encryption do not have any known efficient algorithm despite decades of research. These add to problems which are known not to admit any algorithm at all to solve them in the general case, including Kolmogorov complexity and the Busy Beaver Game. Some of the others are suspected to not to admit such, but wheatear this is actually true or not is yet unknown.

Nevertheless, a large number of instances of many of such problems are actually solvable in practice. Modern SAT Solvers \cite{cit02} can solve problems over up to millions of variables and a large number of those over tens of thousands \cite{cit03}\cite{cit04}. There is no presently known method of deriving an instance of a SAT problem which is hard to solve by any heuristic (although finding easy cases of arbitrary size can be done). In fact, not even a theoretical framework exists to reason about such cases, despite numerous published empirical studies. The incomputable Busy Beaver problem itself has been solved for the two symbol game, up to 4 states inclusively \cite{cit05}.

There is an apparent discrepancy between how hard a problem seems to be in the general case (at least for the human mind) and how easy it is for at least some practical cases, which are the ones of actual concern to us. As such, it is time to turn our attention to studying actual instances of cases of hard problems - in particular those which might appear in practice (and are thus almost always of bounded input size). Investigating these might sometimes lead to efficient algorithms for all real-world needs, or even to the discovery of the general case solution.

Until now, computer science theory has paid little to no attention to finite algorithmics. The very foundational tool used to reason about algorithms - asymptotic complexity of a function, works by definition only for the limit to infinity. Under existing theory, all practical instances of a problem - which almost always entail some bound on input size - are trivially solvable in $O(1)$. This includes computation of incomputable functions, without inclusion of proofs of correctness. 

In this paper we remedy this lacking of the current complexity theory by introducing finite algorithmics. We also present some noteworthy elementary results formulated under it.

\subsection{Prior work}
As far as we know there is no theoretical prior work concerning finite algorithmics, as we are just now introducing the field. Results exists which can, in retrospect, be regarded as part of field of finite algorithmics however. They can be found in the following domains: Artificial Intelligence and Machine Learning (there most problems solved have bounded input size by formulation), Heuristic Solvers for NP-Complete Problems (such as SAT Solvers) and to some extent Cryptography (since most cyphers have fixed key and block sizes). Nevertheless, even within these fields there has been to the best of our knowledge no systematic effort to date to introduce a theory which would allow formal reasoning about relative performance of various algorithms and relationships between various classes of problems with regard to a fixed upper bound on input size. The study of the P/poly complexity class can be considered tangential to this work. We will discuss its relationship to some of the other complexity classes we introduce.  

\subsection{Overview of this paper}
The rest of this paper is organized as follows.

In Section 3 we introduce the theory related to Finite Algorithmics as follows. Section 3.1 contains basic definitions pertaining to formulating computer science problems and solutions on the finite case. In Section 3.2 we introduce definitions which allow us to reason about natural functions restricted to a finite domain using concepts analogous to those employed in general case asymptotic theory. In Section 3.3 we present seven finite case complexity classes and define a few related concepts important to describing inherent difficulty within computer science problems. Section 3.4 formally introduces the problem of solving a computer science problem (i.e. producing the source code of an acceptable algorithm) and describes its inputs and outputs. Also in Section 3.4 we introduce a classification of existing general case computer science problems based on their known or apparent difficulty.

In Section 4 we present some elementary but very important results related to finite algorithmics, formulated within the theory we introduced in Section 3. Section 4.1 deals with relationships between finite complexity classes, both in relation to general case complexity (Sections 4.1.1 and 4.1.2) and among themselves (Section 4.1.3). In Section 4.2 we present a generic method for solving any computer science problem on the finite case (Section 4.2.1) and also introduce some very important elementary results pertaining to what performance guarantees can be attained for sure for certain types of problems (Section 4.2.2). In Section 4.4 we present further ideas which can be employed in the context of finite algorithmics to speed-up the quest for a solution to three well-known hard problems: 3CNF-SAT, String Compression (Kolmogorov Complexity) and Integer Factorization (hard only for a classical computer).

We use Section 4.3 to present 10 arguments which we consider overwhelmingly convincing in proving the existence of value in the study of finite algorithmics.

In Section 5 we discuss some clear implications of the results presented in this paper, including on the way we, as computer science researchers, ought to think about hard problems like P=NP.

In Section 6 we present four directions for immediate further research, and pose a crucial open question, within the realm of finite algorithmics. The answer to that open question can be used to decide (and for most practical purposes prove) P=NP. Furthermore, answering it can be done automatically (if but in a very long time frame). The mere existence of such a questions opens up new avenues in the quest for proofs in deciding P=NP.

Section 7 contains a brief Vitae of the author. Section 8 is dedicated to Acknowledgments and statement of interests (none).

Finally, In Annex 1 we include some estimated upper bounds for tractability for each finite complexity class, given existing hardware.

\section{Materials and Methods}\label{Sec2}

This paper contains results derived by theoretical reasoning based on the author’s current knowledge of advances in complexity theory, building of SAT Solvers and algorithms in general. Since it aims to introduce a new subfield of computer science, namely finite algorithmics, it stops short of providing experimental data as being out of the current scope. 

Obtaining such experimental data, based on the methods presented here is of interest nevertheless and we, the author, encourage fellow scientists to try them out in practice and publish the findings. 

Ultimately, the attractiveness of the field in general steams partially from the prospect of being able to enhance one’s creativity using computers to automate trial-and-error. They can perform tasks such as eliminating obviously unpromising alternatives several orders of magnitude faster than a human.

\section{Theory}\label{Sec3}

We now proceed to introduce the required theory which enables formal reasoning about finite cases of general computer science problems.

\subsection{Introducing Finite Algorithmics}\label{Sec31}
\begin{mydef}\label{Def1}\textbf{Problem of Restricted Size.}

Consider some problem $Prob$ consisting of finding a proper algorithm $S$ which, for any given an input $s$ of length $|s|$ from the universe possible inputs $U \in \{0,1\}^{*}$, produces some output $S(s)$ which is among the set of valid outputs for the given input. We define $Prob[n0]$ as the problem of finding such an algorithm which produces desired output only when $|s| \leq n0$. Such an algorithm can have undefined behavior elsewhere.\qeds
\end{mydef}

\textit{Example}: $SUBSUM[1000]$ is the problem of finding an algorithm which computes correctly whether a particular sum is attainable by summing some or all of at most $n0=1000$ given integers.

\textit{Discussion}: Note that the algorithm which is the answer to $Prob[n0]$ can be different for different $n0$-s. $SUBSUM[1000]$ might have a different algorithm than $SUBSUM[1000000]$. Also, solving the original problem $Prob$ entails providing an algorithm which solves it for any input, regardless of the size - the same for all sizes. Thus, $Prob[n0]$ can be regarded as a 1-parameter function $f:N \rightarrow \{a,b,c,...\}^{*}$, which, given some $n0$, outputs a string representing the desired algorithm in some chosen programming language. $Prob$ itself can be regarded as a parameter-less function (or a constant) providing such. 

\begin{mydef}\label{Def2}\textbf{Problem of Exact Size.}

We define $Prob(n0)$ analogously to $Prob[n0]$ to represent the problem of finding a proper algorithm when the input size is precisely $n0$.\qeds
\end{mydef}

We extend the notations of Definitions 1 and 2 to parameterized complexity accordingly. Namely, when we reason about the complexity of some algorithm not in terms of its input size, but in terms of some parameter $n$ (for example number of variables in a 3CNF-SAT problem instance) - which only bounds the input size but is not exactly equal to it, the same notations apply replacing the length $|s|$ of the input with the definition of this parameter. 

The definition of what constitutes a proper algorithm for a given problem merits attention. For a particular family of computer science problems (e.g. boolean formula satisfiability) a myriad of constraints can be placed on either inputs (e.g. no more than 3 clauses per variable), outputs (e.g. a satisfiable assignment if one exists should also be provided), the algorithm itself (should be no longer than 10 Mbytes) or its runtime behavior (e.g. space and time complexity), in addition to the type of machine which will be running it (e.g. a probabilistic computer, quantum computer) in order to arrive at a particularization which is specific enough to allow us to reason about it formally. Some constraints are more interesting than others though.

\begin{mydef}\label{Def3}\textbf{Full Problem Statement.}

In order to specify the statement of a computer science problem fully, we require the following to be included:
\begin{enumerate}[(1)]
\item \textbf{Theoretical problem statement.} This is a formal description which specifies which particular outputs can be considered correct for a certain input. \textit{Example}: For discrete logarithm we can consider an output correct if it represents the actual discrete logarithm of the input.
\item \textbf{Type of machine used to solve it.} This can be Turing-equivalent, Probabilistic Turing-equivalent or Quantum Turing-equivalent. If humanity discovers other types of machines, this list can be expanded accordingly, without losing validity of most results within this paper.
\item \textbf{Restriction on input size.} This can be specified directly, or via some parameter which constraints it. For the non-finite case, this limit is taken to be $+\infty$. We require that this limit either be $+\infty$ or a natural number explicitly given (not merely constrained).
\item \textbf{Restriction on output size.} This involves setting some constrains on the function which correlates the output of the algorithm to size of the corresponding input. \textit{Example}: We require output be of polynomial size in the input size.
\item \textbf{Restrictions on input universe.} In addition to size restrictions, we can require that the input satisfies some additional constraints, limiting generality (e.g. there be only 3 clauses per variable for a 3CNF-SAT instance, or that it represents a satisfiable formula). These can be included in (1) or not. 
\item \textbf{Accuracy requirements.} These specify how often and in what way is the algorithm allowed to stray from the strict correlation relationship between inputs and outputs defined in (1). For a decision problem, these can be acceptable rates of false-positives and false-negatives over all valid input pairs. They can be specified in absolute terms (i.e. a natural number), or as a bound on the fraction of such to some other quantity - for example constraining Sensitivity and Specificity. For non-decision problems, constrains on absolute or relative error can be included here. Finally, sometimes different requirements for different subsets of the input universe can be formulated (e.g. in case a 3CNF-SAT formula has less than 2 clauses per variable, we require 100\% Sensitivity and Specificity, but if it has more than that we can settle for 99\%).
\item \textbf{Proof Requirements.} This specifies if the algorithm must provide some sort of additional output which can be used to construct a proof that it is indeed correct for the respective input. For a 3CNF-SAT formula this can be a satisfiable assignment, or a certificate of non-satisfiability (do note for this particular example that not all non-satisfiable 3CNF-SAT formulas may have non-satisfiability certificates of polynomial size). We call any such part of the output a certificate (of correctness). Accuracy requirements can be placed on this part of the output as well.
\item \textbf{Completeness Requirements.} This specifies what kind of behavior the algorithm is guaranteed to have over the input universe. In particular, we say that it is \textbf{complete} if it terminates with the required guarantees for all inputs and \textbf{incomplete} if it does not do so for some of them (for which it may produce invalid outputs or simply never terminate).
\item \textbf{Restrictions on size of algorithm.} For some fixed programming language considered, we require that the size of the algorithm produced to solve the problem be bounded from above by some function of the input size. For a general case algorithm, this size must be a constant (however it may be rather large). For a problem of restricted size, it can vary with the input size restriction. Nevertheless, for a particular input size it must have a definite upper bound.\qeds
\end{enumerate}
\end{mydef}

We have deliberately excluded running time and space complexity of the algorithm from the problem definition. This is because for a given problem we will reason about its difficulty in terms of the running-time required to solve it. As with classical complexity theory, this can be taken for the Worst-Case, Average Case, Best Case or anything in-between (including ``average case in practice"). The memory model we employ is generally the \textbf{RAM model}. We typically do not include any mention of space-complexity, since by employing a Perfect Hashing scheme on the accessed memory addresses, space can be bounded from above by the time consumed, with only a small factor increase in the latter. We also typically, but not always, constrain the output to be of polynomial size in the input. We take space bounds to mean additional space besides that used by code of the program itself (which can be modified at runtime if needed!). Similarly we can exclude the time required to load the body of the program into memory (even if it may be extremely large - for example exponential in input size). 

Other machine-specific runtime requirements (such as number of random bits used or number of qubits employed) can be imposed accordingly as in the general case.

Proof requirements are specifically important when we are reasoning about algorithms we either do not know in advance, or about which we do not have sufficient insight to prove that they produce correct outputs for all inputs. For example, for determining the $k$-th bit of Chaitin’s constant \cite{cit06}, for $k$ between $10^9$ and $10^9 + 10$, an algorithm which simply outputs ``1" for all inputs, might in fact be correct for all we know. However, without some insight into why it is correct, this may not be satisfactory enough.

Note that a proof need not always be a requirement. Many image recognition and other algorithms constructed via Machine Learning provide no proof of the correctness of their outputs. In fact, for such algorithms, we currently more or less have little-to-no idea about both \textbf{why} they work so well in practice, and \textbf{when} they work this well (this latter failing has been shown to allow attacks for example against a road-sign recognition algorithm, which produce an image which to a human looks like a clear ``STOP" sign, but to the algorithm it is seems a clear ``Minimum speed 120 Km/h" sign). This has nevertheless not curtailed their adoption in practice.
Also note that the proof part of the output may include only what is required to complete or generate some larger proof (of potentially much larger size, e.g. exponentially larger) in some format which can convince either a human or, respectively, an automated proof verifier for the problem domain that the output is indeed correct. For a 3CNF-SAT instance for example, a proof of unsatisfiability could be just a small subset of the input variables - small enough to allow exhaustive trial of all possible assignments - which, when the input expression is reduced accordingly it generates empty (impossible) clauses.

The restrictions on the size of the algorithm itself are a novelty specific to finite algorithmics. For the general case, the implicit assumption made by humans in their quest for a solution is that there is a single algorithm (of some fixed size) which solves all inputs properly. The interestingness of our theory and of this paper in general rests on the assumption that some problems admit different algorithms (of potentially different sizes) for different input sizes - and that some may not even admit an algorithm for the general case. 

In some cases it can be useful to ``break up" an algorithm (its source code) into a fixed part, which is the same for all inputs in the problem space (similarly to a fixed algorithm for the general case) and a variable part - the ``hint" - which may vary with input size.

\begin{mydef}\label{Def4}\textbf{Algorithms with Hints.}

We define the solution to some problem $Prob$ (of either general or restricted size), to consist of a fixed proper algorithm $S$ which takes as input both the instance $inst$ of the problem and some hint data $hint$ to produce its output, alongside a function $GEN$ which generates the hint for a particular input size $n$. The output for a particular problem instance, is thus $S(inst, GEN(|inst|))$.
We call $S$ a \textbf{hinted algorithm}.\qeds
\end{mydef}

\textit{Discussion}: The advantage of having the $GEN$ function split from the rest of the algorithm's body is that it could be precomputed (note that it takes as parameter the size of the input, not the input itself). By taking $S$ to include a source-code interpreter (a machine simulator) and $GEN(n)$ to include some source code, we can describe any algorithm in this fashion.

For general case problems, if we constrain $GEN(n)$ to be polynomial in size to $n$, and $S$ to run in polynomial time, the algorithms examined will all be contained within the complexity class P/poly. Do note that problems which do not admit P/poly algorithms in the general case (e.g. the hint would grow to super-polynomial size beyond a certain threshold) might very well be solvable efficiently for all sizes involved in practice - up to potentially very large ones. Also, P/poly solutions for the general case may be of no practical use for some problems. Determining the hint may take exponential time, may be no less hard than the original problem itself or the P/poly solution may imply no constructive method at all to generate the hint or even determine if a particular hint is adequate. Alternative algorithms requiring much shorter hints in practice might exist, but they might not behave well for arbitrary large inputs thus placing the general case problem outside P/poly. Finite algorithmics can therefore be considered a field tangentially related to, but fully distinct from study of any general case complexity class, including P/poly.

\subsection{Finite complexity and its classes}\label{Sec32}

In order to be able to reason easily about relative running times of various algorithms, on the finite case - where regular complexity theory will simply give $O(1)$ - we would like to introduce some additional theory.

The easiest extension to the definition of asymptotic growth rate of some natural function $f:N \rightarrow N$ is to simply introduce an upper bound on the constant hidden by the $O$, $o$, or $\Omega$ notations. 

In the following we take a natural function to mean any monotonically non-decreasing function from natural numbers to natural numbers. This includes any function which can represent running time or space complexity of some algorithm for any input up to a certain size (difficulty).

\begin{mydef}\label{Def5}\textbf{Finite complexity with bounded constant and restricted domain.}

For two natural functions $f$ and $g$, some constant natural number $c$, and two other natural numbers $n1$ and $n0$, with $n1 \leq n0$, we say that $f(n)=O_{n1..n0}[c](g(n))$ \textbf{iff} $f(n) \leq c*g(n)$ for all $n$ between $n1$ and $n0$ inclusively.
We extend the definition accordingly to allow for $n0$ to be $+\infty$. 
When $n1$ is the minimum possible value in the input universe, we can omit it and specify only $n2$.\qeds
\end{mydef}

The above definition allows us to describe relative performance of algorithms in some familiar way. For example, for the All-Pairs-Shortest-Path problem, we can say that the complexity of the Floyd-Warshall algorithm \cite{cit07} is $T(n)=O_{+\infty}[100](n^3)$. This essentially means that all of the operations performed by this very short non-recursive algorithm (incrementing loop variables, dereferencing, comparisons and assignments) are no more than $100*n^3$ for a graph with $n$ vertices. This is definitely the case for any $n$ (there are probably less than 20 such operations per $n^3$).

The shortcomings of the above notation steam from the fact that, for finite cases, we have that $f(n)=O_{n1..n0}[c](g(n))$ for any two non-zero functions $f(n)$ and $g(n)$, for some appropriate constant $c$. Thus, we need clarify yet some more, for this approach to become useful.

The natural approach is to choose the constant as small as possible (introduce a tight bound).

\begin{mydef}\label{Def6}\textbf{Finite complexity with minimal constant and restricted domain.}

For two natural functions $f$ and $g$, and two natural numbers $n1$ and $n0$, with $n1 \leq n0$, we say that $Const_{n1..n0}(f,g) = c0$ \textbf{iff} (i) $f(n)=O_{n1..n0}[c](g(n))$ and (ii) $f(n) \neq O_{n1..n0}[c-1](g(n))$.
As before, we allow $n0$ to be $+\infty$ and $n1$ to be omitted where appropriate.\qeds
\end{mydef}

We are now able to reason about an algorithm $S$ in terms of ``if it were to have complexity $g(n)$, how large would the hidden constant need to be?"

\begin{mydef}\label{Def7}\textbf{Apparent relative finite complexity.}

For three natural functions $f$, $g$, and $h$, and two natural numbers $n1$ and $n0$, with $n1 \leq n0$, we say that $f(n) = O_{n1..n0}^{h(n)}(g(n))$ \textbf{iff} $Const(f,g)_{n1..n0} \leq h(n0)*Const(f,g)_{n1..(n1+n0)/2}$. 
We say formally that for the interval $n1..n0$, the function $f$ appears to have complexity $g$, within a factor of $h$.
When the function $h$ is constant, we can write the constant directly.\qeds
\end{mydef}

\textit{Discussion}: We have essentially constrained that from the mid-point of the interval, to its endpoint the constant grows by a factor of at most $h(n0)$. 

For a general case algorithm of some complexity $g(n)$, we have that there exists some $n0$, beyond which its apparent finite case complexity is also $g(n)$ within a factor of $h(n)=const$. This follows directly from the fact in the general case, beyond a certain threshold, the constant remains fixed regardless of $n$.

\begin{mydef}\label{Def8}\textbf{Certain finite complexity.}

For two natural functions $f$, and $g$ and two natural numbers $n1$ and $n0$, with $n1 \leq n0$, we say that $f(n) = OC_{n1..n0}(g(n))$ \textbf{iff} $f(n) = O_{n1..n0}^{1+1/n^2}(g(n))$. \qeds
\end{mydef}

\textit{Discussion}: We have chosen $h(n)=1+1/n^2$, such that $\prod_{n \rightarrow +\infty}(h(n))$ is bounded (it is in fact $\approx 3.68$). This allows us to reason that if $f(n) = OC_{n0}(g(n)) \Rightarrow f(n) = OC_{2*n0}(g(n))$ for all $n0$ beyond a certain threshold, then $f(n) = O(g(n))$. Any $h(n)$ with bounded $\prod_{n \rightarrow +\infty}(h(n))$ can be used to replace our choice.

\begin{mydef}\label{Def9}\textbf{Polynomial rank of a finite complexity.}

For a natural function $f$ and two natural numbers $n1$ and $n0$, with $n1 \leq n0$, we say that $PolyRank_{n1..n0}(f) = k$, \textbf{iff} (i) $f(n) = O_{n1..n0}^{2}(n^{k-1})$ and (ii) $f(n) \neq O_{n1..n0}^{2}(n^{k-2})$.\qeds
\end{mydef}

\textit{Discussion}: We have chosen $h(n)=2$, such that $\prod_{n=n0..2^{k}*n0}(h(n))$ after $k$ doublings of $n0$ is no more than $2^{k+1}$ (it is precisely this actually). This allows us to reason that if $PolyRank_{n0}(f(n)) = k \Rightarrow PolyRank_{2*n0}(f(n)) = k$ for all $n0$ beyond a certain threshold, then $f(n) = O(n^{k})$. 

\begin{mydef}\label{Def10}\textbf{Logarithmic rank of a finite complexity.}

For a natural function $f$ and two natural numbers $n1$ and $n0$, with $n1 \leq n0$, we say that $LogRank_{n1..n0}(f) = k$, \textbf{iff} (i) $f(n) = OC_{n1..n0}(log^k(n))$ and (ii) $f(n) \neq OC_{n1..n0}(log^{k-1}(n))$.
We consider only $k \geq 1$. If no such $k$ exists, we say that $LogRank_{n1..n0}(f) = +\infty$.\qeds
\end{mydef}

\begin{mydef}\label{Def11}\textbf{Linear finite complexity class.}

For a natural function $f$ and two natural numbers $n1$ and $n0$, with $n1 \leq n0$, we say that $f \in Linear_{n1..n0}$, \textbf{iff} $f(n) = OC_{n1..n0}(n)$.\qeds
\end{mydef}

\begin{mydef}\label{Def12}\textbf{Polylogarithmic finite complexity class.}

For a natural function $f$ and two natural numbers $n1$ and $n0$, with $n1 \leq n0$, we say that $f \in PolyLog_{n1..n0}$, \textbf{iff} $LogRank_{n1..n0}(f) \leq log(n)/log(log(n))$.\qeds
\end{mydef}

\textit{Discussion}: The value $log(n)/log(log(n))$ was chosen such that the resulting effective growth rate is linear or below.

\begin{mydef}\label{Def13}\textbf{Polynomial finite complexity class.}

For a natural function $f$ and two natural numbers $n1$ and $n0$, with $n1 \leq n0$, we say that $f \in Poly_{n1..n0}$, \textbf{iff} (i) $PolyRank_{n1..n0}(f) \leq 1+log(log(n))$ and (ii) $f \notin PolyLog_{n1..n0}$.\qeds
\end{mydef}

\textit{Discussion}: The value $log(log(n))$ was chosen such that any problem within this complexity class is most likely tractable for almost all inputs which show up in practice. For example, if we take $f(n)$ to represent the complexity of some algorithm based on its input size, for an input of size $2^{64}$ ($\approx$16 thousand Petabytes), then the maximum $PolyRank(f)$ for $f \in Poly$ is 7. Also for an input of mere $1024$ size, the maximum $PolyRank$ can still be $\approx$5. We consider this appropriate since a few interesting problems, like Assignment, have general case complexity around these thresholds. If the practical cases for the problem at hand involve $n \ll 1024$, the constant 1 in $1+log(log(n))$ can be increased to something more suitable, like 2 or 5.

\begin{mydef}\label{Def14}\textbf{Semi-Polynomial finite complexity class.}

For a natural function $f$ and two natural numbers $n1$ and $n0$, with $n1 \leq n0$, we say that $f \in SemiPoly_{n1..n0}$, \textbf{iff} (i) $PolyRank_{n1..n0}(f) \leq 1+log(n)$ and (ii) $f \notin Poly_{n1..n0}$ and $f \notin PolyLog_{n1..n0}$.\qeds
\end{mydef}

\textit{Discussion}: The value $log(n)$ was chosen such that any problem within this complexity class would most likely be tractable for a significant number of inputs which show up in practice. For example, if we take $f(n)$ to represent the complexity of some algorithm based on its input size, for an input of size $1024$, the maximum $PolyRank(f)$ for $f \in SemiPoly$ is 11, placing the problem at the threshold of tractability versus intractability given existing super-computers. Again, if in practice $n \ll 1024$, the constant 1 in the $1+log(n)$ can be adjusted to something more suitable. 

\begin{mydef}\label{Def15}\textbf{Exponential rank of a finite complexity}

For a natural function $f$ and two natural numbers $n1$ and $n0$, with $n1 \leq n0$, we say that $ExpRank_{n1..n0}(f) = 1/k$, \textbf{iff} (i) $f(n) = OC_{n1..n0}(2^{n/k})$ and (ii) $f(n) \neq OC_{n1..n0}(2^{n/(k+1)})$.\qeds
\end{mydef}

\textit{Discussion}: We are thus describing for a certain $n$, how large the exponent of 2 needs to be, in order to tightly provide an upper bound for the function, as a fraction of $n$ itself.

\begin{mydef}\label{Def16}\textbf{Exponential finite complexity class.}

For a natural function $f$ and two natural numbers $n1$ and $n0$, with $n1 \leq n0$, we say that $f \in Exp_{n1..n0}$, \textbf{iff} (i) $ExpRank_{n1..n0}(f) \leq 8$ and (ii) $f \notin SemiPoly_{n1..n0}$, $f \notin Poly_{n1..n0}$ and $f \notin PolyLog_{n1..n0}$.\qeds
\end{mydef}

\textit{Discussion}: We are taking the exponential finite complexity class to represent everything which is at most about simply exponential in $n$, which does not belong to better class. This is a break from the general case EXPTIME complexity class, where the exponent is allowed to be polynomial in $n$, not just linear. We have chosen the value 8 instead of 1, to allow functions of the order of $n!$ to fit into this class, up to $n \approx 512$. This should be more than enough for anything beyond it to be considered intractable in practice.

\begin{mydef}\label{Def17}\textbf{Intractable finite complexity class.}

For a natural function $f$ and two natural numbers $n1$ and $n0$, with $n1 \leq n0$, we say that $f \in Intr_{n1..n0}$, \textbf{iff} (i) $ExpRank_{n1..n0}(f) > 8$ and (ii) $f \notin SemiPoly_{n1..n0}$, $f \notin Poly_{n1..n0}$ and $f \notin PolyLog_{n1..n0}$.\qeds
\end{mydef}

\textit{Discussion}: We are basically naming everything above exponential finite complexity class to be Intractable. In practice, for some small $n<110$ (for example $n<20$), problems in this class may still be solvable. Nevertheless, if $n$ is this small, then the output for all possible inputs can be precomputed and given as a hint to a hinted algorithm.

\begin{mydef}\label{Def18}\textbf{Constant finite complexity class.}

For a natural function $f$ and two natural numbers $n1$ and $n0$, with $n1 \leq n0$, we say that $f \in Const_{n1..n0}$, \textbf{iff} (i) $f = OC_{n1..n0}(c0)$, for some fixed constant $c0$ and (ii) $LogRank_{n1..n0}(f) \leq 1$.\qeds
\end{mydef}

\textit{Discussion}: Constant finite complexity is quite similar to general case constant complexity. Do note however that the constant rank obtained in practice, might in fact be hiding some small growing non-constant function for the general case. Furthermore, when reasoning about complexity with regard to different upper bounds $n0$, the constant $c0$ must remain fixed - independent of $n0$.

\subsection{Finite complexity hierarchy}\label{Sec33}

Given the definitions in Section 3.2, for any given natural numbers interval $n1..n0$, with $ n1 \leq n0$, we can classify all the natural functions $f$, into precisely one of the following classes.
\begin{enumerate}[(1)]
\item $Const_{n1..n0}$
\item $PolyLog_{n1..n0}$
\item $Linear_{n1..n0}$
\item $Poly_{n1..n0}$
\item $SemiPoly_{n1..n0}$
\item $Exp_{n1..n0}$
\item $Intr_{n1..n0}$
\end{enumerate}

The higher the level a function occupies in this hierarchy, the less tractable we expect a problem admitting an algorithm of this complexity to be.

We are now armed with the ability to describe the variation in the classification of a particular natural function $f$, as we allow the input domain to expand. 

\begin{mydef}\label{Def19}\textbf{Threshold of complexity class explosion.}

For a natural function $f$, a complexity hierarchy level $l$ and a natural number $n1$, we say that $Explode_{n1}(f,l) = Min \{z \geq n1 | (\forall n0<z \Rightarrow \exists r \leq l, f \in Complexity_{n1..n0}(r)) \land (\forall r \leq l \Rightarrow f \notin Complexity_{n1..z}(r))\}$, where $Complexity_{n1..n0}(l)$ denotes the corresponding finite complexity class at level $l$ in the hierarchy. If the set is empty, we take the marker value $+\infty$. We can represent the hierarchy level by either its index or by the corresponding name (PolyLog,Linear,etc.). \qeds
\end{mydef}
\textit{Discussion}: We are taking the explosion threshold for a function $f$, to be the minimum $n0$ no lesser than some $n1$ value, where the function $f$ will belong to a finite complexity class strictly above the respective level.

Do note that the function $f$ might have different $Explode_{n1}(f,l)$ values, for different $n1$-s. Also, for some level $l$, there could exist some $n1$ beyond which the explosion threshold is $+\infty$. 

\begin{mydef}\label{Def20}\textbf{Threshold of complexity class collapse.}

For a natural function $f$, a complexity hierarchy level $l$ and a natural number $n1$, we say that $Collapse_{n1}(f,l) = Min \{z \leq n1 | \forall n0, n1 \geq n0 \geq z \Rightarrow \exists r \leq l, f \in Complexity_{1..n0}(r)\}$, where $Complexity_{1..n0}(l)$ denotes the corresponding finite complexity class at level $l$ in the hierarchy. If the set is empty, we take the marker value $+\infty$. We can represent the hierarchy level by either its index or by the corresponding name (PolyLog,Linear,etc.). We also allow $n1$ to be $+\infty$. \qeds
\end{mydef}
\textit{Discussion}: We are taking the collapse threshold for a function $f$, to be the minimum $n0$ beyond which $f$ belongs to a certain complexity class or better, at least for up to another higher limit $n1$ (which we allow to be $+\infty$).

\subsection{Finite Algorithmics and Problems}\label{Sec34}

When attempting to solve a computer science problem, we shall consider the following as input:
\begin{enumerate}[(1)]
\item The Full Problem Statement according to Definition 3.
\item The interval $n1..n0$ of input size (or other difficulty constraining parameter) where practical instances of the problem lie.
\item The worst acceptable finite complexity class for running time required by desired algorithm. We can reason in terms of worst-case/best-case/average-case either for the entire domain or simply for the instances which occur in practice. We can describe this by requiring that the running time belongs to a class up to a certain level of the finite complexity hierarchy described in Section 3.3. In practice this bound results directly from the following: point (2) above, point (4) below and the speed of existing hardware. If the produced algorithm allows high degree of parallelism, then the intended cluster size can also be factored in. See Appendix 1 for approximations considering currently existing hardware.
\item The amount of time which can be allotted to actually discovering the solution. When reasoning about a potentially variable upper input bound $n0$, this can also be expressed in terms of finite complexity class, with regard to the difficulty parameter $n0$.
\item Some collection of source-code for known algorithms and data structures.
\item The verifiability thresholds given existing algorithms. Namely:
\begin{itemize}
\item v1: The answers for ALL instances of this input size (difficulty) or below can be precomputed efficiently.
\item v2: The answer to ANY instance of this input size (difficulty) or below can be determined efficiently.
\item v3: The answer to MANY instances of this input size (difficulty) or below can be determined efficiently.
\item v4: The answer to SOME instances of this input size (difficulty) or below can be determined efficiently. For instances beyond this input size (difficulty), it is considered highly unlikely for then-existing state of the art to be able to compute an answer efficiently.
\end{itemize}
\item For instances of input size (difficulty) up to $v4$, some already existing correct input/output golden data can be available. Including it can be useful to save running-time during testing, by avoiding the need to run the original algorithm which generated it (which even though efficient, might still have consumed a lot of time or resources initially). Golden data can include:
\begin{itemize}
\item Precise outputs for some inputs.
\item Lower and/or upper bounds on the correct output for some inputs.
\end{itemize}
\end{enumerate}

The desired output for will consist of one of the following:
\begin{enumerate}[(1)]
\item A hinted algorithm $S$ and its fixed hint $hint_{n0}$.
\item A hinted algorithm $S$ and another algorithm $GEN$ which can generate $hint_n$ for any $n$ in $n1..n0$. We call $GEN$ the Hint Genesis Algorithm.
\item A hinted Algorithm $A_{n0}$ which generates the pair of algorithms from point 2 above, alongside its fixed hint, $hint_{A_{n0}}$. We call such an algorithm the Generator Algorithm.
\end{enumerate}

The most straightforward formulation of the above is the following: ``Given what we already know, find a sufficiently efficient, potentially hinted, algorithm which solves the full problem statement on any input of size (difficulty) within the interval $n1..n0$, or show how such can be constructed."

It is simple to note that an output of type (2) can be precomputed from one of type (3), by running the algorithm $A_{n0}$. Furthermore an output of type (1) can be precomputed from one of type (2). It is useful however to reason about these options separately, since the precomputation step which converts one into the other may not always be efficient.

When reasoning about the relative efficiency of finite case algorithms, we shall consider them both in terms of running time complexity and complexity of hint size, with relation to input size (difficulty). Thus, we say that an algorithm is $T(n)/G(n)$ efficient, where $T(n)$ is its running time and $G(n)$ its hint size. The hint, as well as the program code, are assumed to be already loaded in memory. We can reason about a certain algorithm / problem saying for example it has finite case complexity $Poly/PolyLog$ on the domain of interest.

The tables in Appendix 1 detail the maximum estimated tractable difficulty for each of the finite complexity classes, given currently existing hardware.

The computer science problems of interest to researchers can be classified into the following categories based what is currently known about the hardness of solving them in the general case.
\begin{itemize}
\item \textbf{Efficiently solvable (ES).} Problems in this category include string pattern-matching, shortest paths in graphs and many, many others. In fact most of the problems humanity has tackled are now included in this category. The state of the art algorithms known to the scientific community are sufficiently efficient to solve all practical instances of such problems.
\item \textbf{Tractable but insufficiently so (TR).} For some problems, like Assignment and Multidimensional Range Queries we know sufficiently efficient algorithms to solve any instance of them relatively quickly, but for some practical applications we need even faster ones. We may not know if such algorithms exist, as the gap between known lower-bounds and the upper-bounds can be quite large.
\item \textbf{Intractable for large input sizes, but tractable for small ones (PTR).} For problems such as Prime Factorization, Discrete Logarithm, Boolean Formula Satisfiability, Knapsack problem and others, an algorithm for solving them precisely is known, but the best one is still very inefficient (largely in terms of running time), thus making it suitable only for small input sizes. Some problems in this category (including some NP-Complete ones), might fall in the TR category for some practical applications, when a sufficiently accurate approximation algorithm is known, when the practical input sizes are small, or when the practical instances have some other trait (known or unknown) making them easier than the general case. For example, having a small target sum for the knapsack problem, or having a small number of clauses per variable for 3-CNF-SAT makes these belong to TR.
\item \textbf{Intractable because of assumed hardness (ITRA).} For problems in this category, no algorithm is known which solves any instance but those of trivial size and, furthermore, it is strongly suspected that none exists, because they belong to a certain complexity class. Problems such as Quantified Boolean Satisfiability which belong to the complexity class PSPACE-Complete are believed not to be solvable in polynomial time. However it is not known if this is so or not. Also, \#P-Complete problems like \#SAT are also believed to be harder than NP-Complete ones. But this is yet again also unknown.
\item \textbf{Intractable and mysterious (ITRM).} There are problems - like determining the encryption key used to encrypt a known plain text using AES given the cypher output - which are not known to belong to a specific presumably hard complexity class. Nevertheless, they are generally regarded to be intractable by mere fact that a large number of researchers have spent a lot of time thinking about them and yet no efficient algorithm has been published.
\item \textbf{Truly Intractable (ITR).} Some problems, like Halting Problem, Busy Beaver, Kolmogorov Complexity (very useful in compression and encryption), word problem for semi-Thue systems, determining the bits of Chaitin’s constant to non-trivial precision and many others have been proven to be incomputable in the general case. That is, no algorithm at all exists which solves them. This nevertheless does not imply they are incomputable for bounded-size input. Two-symbol Busy Beaver game for example has been solved precisely for up to 4 states \cite{cit05}.
\end{itemize}

Finite Algorithmics aims to provide efficient algorithms for problems in TR, PTR, ITRA, ITRM and ITR classes but only for the finite cases which occur in practice, without necessarily solving or giving a definite negative answer with regard to a solution for the general case. Furthermore, known algorithms for general-case problems (in any tractability category) can be used in the automated or semi-automated quest for efficient ones for the finite case.

\begin{mydef}\label{Def20a}\textbf{The Problem of Solving a Problem.}

Given a particular finite case computer science problem $Prob[n0]$, specified by the inputs 1,3-7 in this section (excluding the actual finite limits), we refer to the problem of solving this problem, as the problem of indentifying some algorithm $A$, takes as input a natural number $n \leq n0$, and generates the source code of some hinted algorithm, along with its hint, which solves $Prob[n]$. In case we are interested in solving the problem for any upper input size (difficulty) bound, we can take the $n0$ and $n$ to be $+\infty$ and allow $A$ to take this special value for its single parameter. We reason about the complexity of solving a problem in terms of complexity of the corresponding algorithm $A$. For some problem $Prob$ we use $Complexity(Prob, n0)$ to denote the finite complexity class of the problem of solving $Prob[n0]$.\qeds
\end{mydef}
\textit{Discussion}: The complexity of solving a problem can be thought of, essentially, as the running time of an algorithm which runs on some machine (e.g. a classical computer) which produces the source code required to solve any instance of such problem, up to some upper input (difficulty) bound, which itself is no larger than some n0. Note that solving a computer science problem is in itself a computer science problem, to which we can apply the entire theoretical framework presented.

\section{Results}\label{Sec4}

We now proceed to present some elementary results of high importance derived within the theoretical framework introduced in Section 3.

\subsection{Relationships between complexity classes, finite and general}\label{Sec41}

In this subsection we present basic relationships between finite case and general case complexity classes for natural functions.

\subsubsection{From finite case complexity to general case}\label{Sec411}

\begin{mytheo}\label{Theo21}\textbf{When finite case complexity implies general case complexity.}

For any natural function f the following statements are true:
\begin{enumerate}[(1)]
\item If we have that $f(n) = OC_{n0}(g(n))$ for some $n0$, and also that for any $n' \geq n0$,  $f(n) = OC_{n'}(g(n))  \Rightarrow f(n) = OC_{2*n'}(g(n))$, then $f(n) = O(g(n))$.
\item If we have that $PolyRank_{n0}(f) = k$, for some $n0$ and $k$, and also that for any $n' \geq n0$, $PolyRank_{n'}(f) \leq k \Rightarrow PolyRank_{2*n'}(f) \leq k$, then  $f(n)=O(n^k)$. 
When such fixed $n0$ and $k$ exist, we can say that $f$ belongs to the general case polynomial complexity class P.
\item If we have that $LogRank_{n0}(f) = k$, for some $n0$ and $k$, and also that for any $n' \geq n0$, $LogRank_{n'}(f) \leq k \Rightarrow LogRank_{2*n'}(f) \leq k$, then  $f(n)=O(log^k(n))$. 
\item If we have that $f \in PolyLog_{n0}$, for some $n0$, and also that for any $n' \geq n0$, $f \in PolyLog_{n'} \Rightarrow f \in PolyLog_{2*n'}$, then  $f(n)=O(n)$.
When such $n0$ exists, we can say that $f$ is grows at most Linearly. Depending on the exact PolyRank (if fixed), it may in fact grow just polylogarithmically. 
\item If we have that $f \in Linear_{n0}$, for some $n0$, and also that for any $n' \geq n0$, $f \in Linear_{n'} \Rightarrow f \in Linear_{2*n'}$, then  $f(n)=O(n)$.
Like above, when such $n0$ exists, we can say that $f$ is grows at most Linearly. 
\item If we have that $f \in Poly_{n0}$, for some $n0$, and also that for any $n' \geq n0$, $f \in Poly_{n'} \Rightarrow f \in Poly_{2*n'}$, then  $f(n)=O(n^{1+log(log(n))})$.
This is strictly speaking superpolynomial, but barely so. Also, it is sub-exponential. 
\item If we have that $f \in SemiPoly_{n0}$, for some $n0$, and also that for any $n' \geq n0$, $f \in SemiPoly_{n'} \Rightarrow f \in SemiPoly_{2*n'}$, then  $f(n)=O(n^{1+log(n)})$.
This is superpolynomial, but also sub-exponential.
\item If we have that $f \in Exp_{n0}$, for some $n0$, and also that for any $n' \geq n0$, $f \in Exp_{n'} \Rightarrow f \in Exp_{2*n'}$, then  $f(n)=O(2^{8*n})$.
When such $n0$ exists, we can say that $f$ belongs to the EXP complexity class.
\item If we have that $ConstRank_{n0}(f) = c0$, for some $n0$ and $c0$, and also that for any $n' \geq n0$, $ConstRank_{n'}(f) \leq c0 \Rightarrow ConstRank_{2*n'}(f) \leq c0$, then  $f(n)=O(1)$. 
When such $n0$ exists, we can say that $f$ is of constant growth rate.\qeds
\end{enumerate}
\end{mytheo}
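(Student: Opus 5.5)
All nine items follow one template, which I will carry out first for item (1) and then transfer. Writing $C(n) = Const_{n1..n}(f,g)$ for the minimal constant of Definition 6, I will analyse how $C$ evolves along the doubling sequence $n_j = 2^j n0$.

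\textbf{Step 1: induction.} From the hypothesis $f(n) = OC_{n0}(g(n))$ and the implication $OC_{n'} \Rightarrow OC_{2n'}$ for all $n' \geq n0$, induction on $j$ gives $f(n) = OC_{n_j}(g(n))$ for every $j \geq 0$.

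\textbf{Step 2: telescoping the constants.} By Definition 8 and Definition 7, membership in $OC_{n_j}$ gives
\[
C(n_j) \leq \bigl(1+1/n_j^2\bigr)\, C\bigl((n1+n_j)/2\bigr).
\]
Since $C$ is monotone in the right endpoint, I will compare $C((n1+n_j)/2)$ with $C(n_{j-1})$. Up to the harmless rounding when $n1$ is the minimal input value, $(n1+n_j)/2 \le n_{j-1} + n1/2$, so the midpoint does not exceed $n_{j-1}$. Telescoping then yields
\[
C(n_j) \leq C(n0)\prod_{i \le j}\bigl(1+1/n_i^2\bigr) \leq 3.68\, C(n0).
\]
This uses the bounded product noted in the discussion after Definition 8. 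Any $n$ lies below some $n_j$, so by monotonicity $f(n) \le 3.68\, C(n0)\, g(n)$ for all $n \ge n1$, which is exactly $f = O(g)$. This is the argument the paper's discussion already sketches.

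\textbf{Step 3: transferring to items (2)--(9).} Each remaining item is reduced to the same doubling induction. For (2), $PolyRank \le k$ means the apparent-complexity condition holds with $h=2$ against $n^{k-1}$. Telescoping then gives a factor $2^{j+1}$ after $j$ doublings. Since $n_j = 2^j n0$, this factor is $O(n)$, which absorbs one extra power of $n$ and gives $O(n^{k})$; this is the reason for the shift from $k-1$ to $k$ in Definition 9. Item (3) is verbatim item (1) with $g = \log^k n$. Items (5) and (9) are item (1) with $g=n$ and $g=c0$ respectively, reading $ConstRank$ via Definition 18.

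For (4), (6) and (7), class membership gives a rank bound that depends on $n$ ($\log n/\log\log n$, $1+\log\log n$, $1+\log n$). I will therefore apply the item-(1)/(2) telescoping with the exponent taken at the current endpoint $n_j$. This yields $f(n_j) \le c\, n_j^{1+r(n_j)}$, and monotonicity of $f$ extends the bound to all $n$. For (4), I note that $\log^{\log n/\log\log n} n = n$, giving $O(n)$. For (8), I use $ExpRank \le 8$ with the $OC$ telescoping to get $O(2^{8n})$.

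\textbf{Main obstacle.} I expect the main difficulty to be the rank-type items (4), (6), (7), where the exponent varies with $n$. There the telescoped constants in Definition 7 are computed against different comparison functions $g$ at different scales, so the chain of inequalities does not compose directly. My first attempt will be to fix the target $g(n) = n^{1+r(n)}$ (respectively $\log^{r(n)} n$) as a single function and verify that the class hypotheses imply the $h$-bounded growth of $Const(f,g)$ for this fixed $g$. This works because $g$ dominates every $n^{k}$ with $k \le r(n)$ pointwise on the relevant interval. A secondary nuisance is the ill-typed $ExpRank \le 8$ and the rounding of midpoints, both of which I will handle by reading them in the evidently intended way.
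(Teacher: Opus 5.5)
Your template (induction along $n_j=2^j n_0$, multiplying the per-doubling factors allowed by Definition 7) is exactly the paper's one-line sketch. Item (1) is fine as you wrote it, because there you use the bound $f(n)\le C(n_j)\,g(n)$ for \emph{every} $n\le n_j$.

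The gap is in (6) and (7). There you pass instead through ``$f(n_j)\le c\,n_j^{1+r(n_j)}$, then monotonicity of $f$''. That step needs the target $G(n)=n^{1+r(n)}$ to satisfy $G(2n)=O(G(n))$, and it does not: $G(2n)/G(n)=\Theta(\log n)$ for $r=\log\log n$, and $G(2n)/G(n)=4n^2$ for $r=\log n$. So you would only obtain $O(n^{1+\log\log n}\log n)$ and $O(n^{3+\log n})$.

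For (6) the repair is to keep the pointwise bound $f(m)\le 2^jC_0\,m^{e_j}$ for all $m\le n_j$, with $e_j$ the largest allowed exponent. Then for $m\in(n_{j-1},n_j]$ use $2^j<2m/n_0$ and $m^{\log\log 2m-\log\log m}=O(1)$.

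For (7) even that is not enough. For $m$ just above $n_{j-1}$ one has $m^{\log n_j}\approx m^{1+\log m}$, while the accumulated factor $2^j\approx 2m/n_0$ remains, so you get only $O(n^{2+\log n})$. The missing idea is that here the allowed exponent $e_j=\lfloor\log n_j\rfloor$ rises by exactly one per doubling, and this absorbs the factor $2$. Set $C_j=Const_{n_j}(f,n^{e_j})$. Since $n^{e_j}=n\cdot n^{e_{j-1}}$, splitting off $m=1$ and using $m\ge 2$ otherwise gives $Const_{n_{j-1}}(f,n^{e_j})\le\max\{f(1),C_{j-1}/2\}$. Hence $C_j\le\max\{2f(1),C_{j-1}\}$ stays bounded, and $f(m)\le C\,m^{\log n_j}<C\,m^{1+\log m}$ for $m>n_{j-1}$. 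Merely multiplying the allowed growth factors (your bookkeeping, and the paper's sketch read literally) yields the bound claimed in (7) only at the doubling points.

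Your ``main obstacle'' is also misdiagnosed, and the proposed fix does not work. Chaining across scales with different comparison functions composes directly. If $g_{j-1}\le g_j$ pointwise, then $Const_{n_{j-1}}(f,g_j)\le Const_{n_{j-1}}(f,g_{j-1})$, so Definition 7 at scale $n_j$ gives $Const_{n_j}(f,g_j)\le h\,Const_{n_{j-1}}(f,g_{j-1})$. Pointwise domination is the right tool for that.

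What you actually need, and already use silently in (2), (3) and (8), is a different fact. Definition 7's condition at a \emph{fixed} scale must pass from $g_1$ to $g_2$, so that ``rank $\le k$'' can be read as the condition against the largest allowed comparison function (which makes the $e_j$ nondecreasing). That transfer holds when $\phi=g_2/g_1$ is nondecreasing, not merely when $g_2\ge g_1$. On the upper half, $f/g_2\le (f/g_1)/\phi(N/2)$. At the lower-half maximiser of $f/g_1$, $f/g_2\ge (f/g_1)/\phi(N/2)$. Combining these with the hypothesis for $g_1$ gives the condition for $g_2$. This covers $n^a\to n^b$ for $b\ge a$, and $2^{n/k}\to 2^{8n}$ for $1/k\le 8$.

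It fails for your single fixed target. The ratio $n^{1+\log n}/n^{\log N}$ is not monotone on $[1,N]$; in fact $n^{1+\log n}<n^{\log N}$ for $2\le n<N/2$. So ``$g$ dominates every $n^k$ on the relevant interval'' is false exactly where you need it.
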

\begin{myproof} The proof involves straight forward induction and computation of the bound on the product of the allowed constant growth rates under each corresponding definition. While the significance of this theorem is big, its proof is trivial enough to be omitted from this paper.\qeds
\end{myproof}
\begin{mycolo} Statements 1-9, remain true if the induction hypothesis of the second part is extended to refer not only to $n'$, but to all $n''$, with $n0 \leq n'' \leq n'$. \qeds
\end{mycolo}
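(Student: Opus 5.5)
The plan is to reduce the Corollary to Theorem~\ref{Theo21} rather than redo its computations. Fix one of the nine statements and write $\mathcal{P}(m)$ for the property it iterates, e.g.\ $f(n)=OC_{m}(g(n))$ in (1), $PolyRank_{m}(f)\leq k$ in (2), $f\in Linear_{m}$ in (5), and so on. The extended hypothesis reads: for every $n'\geq n0$, if $\mathcal{P}(n'')$ holds for \emph{all} $n''$ with $n0\leq n''\leq n'$, then $\mathcal{P}(2n')$ holds. I will show that this, together with $\mathcal{P}$ on the base range, forces $\mathcal{P}$ at every point of the doubling chain $n0,2n0,4n0,\dots$. That chain is exactly what the (omitted) induction of Theorem~\ref{Theo21} consumes.

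The key step is a strong induction on $m\geq n0$ with claim $\mathcal{P}(m)$. Take $m>2n0$ and set $n'=\lfloor m/2\rfloor$. Then $n0\leq n'<m$, so the induction hypothesis gives $\mathcal{P}(n'')$ for all $n0\leq n''\leq n'$, and the extended hypothesis yields $\mathcal{P}(2n')$. When $m$ is even, $2n'=m$ and we are done. When $m$ is odd, I would use the monotonicity of the objects in Definitions~\ref{Def5}--\ref{Def18} in the right endpoint. The minimal constant $Const_{n1..n0}(f,g)$ is non-decreasing in $n0$, so a bound certified on $1..2n'$ should transfer to $1..2n'+1$ up to the same slack that the doubling product already absorbs. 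Alternatively, I would restrict the whole argument to the chain $m=2^{j}n0$ and accept $\mathcal{P}$ only there, which suffices for the final bound.

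Once $\mathcal{P}(2^{j}n0)$ holds for all $j$, the argument is identical to that of Theorem~\ref{Theo21}. The constant grows by at most a factor $h$ from each midpoint to each endpoint, and the product of these factors is bounded: by $\approx 3.68$ for $h=1+1/n^{2}$ in (1), (3)--(5) and (9). For (2), (6) and (7), the factor-$2$ slack of Definition~\ref{Def9} is absorbed into the exponent exactly as in the Theorem. This gives a single constant $c$ with $f(n)\leq c\,g(n)$ for all $n\geq n0$, and hence the stated $O(\cdot)$ conclusions.

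The main obstacle I expect is the base of the strong induction. The extended premise at $n'=n0$ only needs $\mathcal{P}(n0)$, but at $n'=n0+1$ it already needs $\mathcal{P}(n0+1)$, which is not given. Points strictly between $n0$ and $2n0$ are therefore never produced by the hypothesis. I would first try to prove that $\mathcal{P}(m)$ implies $\mathcal{P}(m')$ for $n0\leq m'\leq m$, so that $\mathcal{P}(2n0)$ (obtained from $\mathcal{P}(n0)$) fills the gap. This downward-closure is plausible for the tight-constant ratio in Definition~\ref{Def7}, but it is not automatic, because the midpoint of the interval shifts with the endpoint.

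If downward-closure fails, the fallback is to read the Corollary's hypothesis as quantifying over $n''$ only along the chain, i.e.\ $n''\in\{2^{i}n0\}$ with $2^{i}n0\leq n'$. Equivalently, one can strengthen the base assumption to $\mathcal{P}$ holding on all of $n0..2n0$. Under either reading the strong induction goes through cleanly, and the Corollary follows.
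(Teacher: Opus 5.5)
\textbf{The obstacle you flag is fatal, and the proposal never proves the stated corollary.} (The paper gives no argument for it either; it is stated bare.)

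From $\mathcal{P}(n_0)$ the extended step yields $\mathcal{P}(2n_0)$ and nothing more. Already at $n'=n_0+1$ its antecedent needs $\mathcal{P}(n_0+1)$, which is never supplied. Once $\mathcal{P}$ fails anywhere in $(n_0,2n_0)$, the extended premise holds vacuously for every larger $n'$.

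The downward closure you hope for is false, and statement (1) of the corollary is false under your (natural) reading. Take $g\equiv 1$, so $Const_{1..m}(f,g)=f(m)$ for non-decreasing $f$. Take $n_1=1$, round the midpoint $(1+m)/2$ down, and let $n_0=3$. Set $f(1)=f(2)=1000$, $f(3)=\dots=f(6)=1100$, and let $f$ grow without bound from $7$ on. Then:
\begin{itemize}
\item $\mathcal{P}(3)$ reads $1100\le\frac{10}{9}\cdot 1000$, which holds.
\item $\mathcal{P}(6)$ reads $1100\le\frac{37}{36}\cdot 1100$, which holds.
\item $\mathcal{P}(4)$ reads $1100\le\frac{17}{16}\cdot 1000$, which fails.
\end{itemize}
So $f=OC_3(g)$. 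At $n'=3$ the extended step asks only $\mathcal{P}(3)\Rightarrow\mathcal{P}(6)$, which holds. For every $n'\ge 4$ its antecedent contains the false $\mathcal{P}(4)$. Yet $f\neq O(1)$. The same $f$ has $\mathcal{P}(6)$ without $\mathcal{P}(4)$, which refutes the downward closure directly.

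\textbf{Your fallbacks replace the corollary rather than prove it.} Quantifying $n''$ only over the chain $\{2^i n_0\}$ just reproduces Theorem 1, so the corollary becomes empty. Strengthening the base to all of $n_0..2n_0$ is a different statement, not an equivalent one.

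Your odd step is also wrong as written. $\mathcal{P}(2n')$ does not give $\mathcal{P}(2n'+1)$, because $f/g$ can reach a new maximum at $2n'+1$. For (1), under the rounding above, what does hold is $\mathcal{P}(2n'+2)\Rightarrow\mathcal{P}(2n'+1)$:
\begin{itemize}
\item both intervals have midpoint $n'+1$;
\item $Const$ is non-decreasing in the right endpoint;
\item $h(n)=1+1/n^2$ is decreasing.
\end{itemize}
Moreover, $\mathcal{P}(2n'+2)$ is available inside your strong induction, since it needs $\mathcal{P}$ only on $[n_0,n'+1]$ and $n'+1<2n'+1$. That rescues the strengthened-base version of (1). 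However, it depends on the rounding and $n_1$ convention: with midpoint $\lfloor m/2\rfloor$ the shared-midpoint pair becomes $(2k,2k+1)$ and the transfer runs the wrong way. You also have not checked it for (2)--(9).

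So the corollary can only be saved by an explicitly changed hypothesis, and you should state and prove that modified statement as such.
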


\begin{mytheo}\label{Theo22}\textbf{When finite case complexity excludes general case complexity.}
For any natural function $f$ and any finite complexity level $l$, if there exists an infinite number of $n0$-s such that $Explode_{n0}(f,l) <+\infty$ then $f$ belongs to a general complexity class worse than the corresponding one for level $l$ given by Theorem 1. \qeds
\end{mytheo}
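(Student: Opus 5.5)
The plan is to argue by contraposition, using Theorem 1 as the bridge between finite and general case behaviour. Suppose, for contradiction, that $f$ belongs to the general complexity class associated with level $l$ by Theorem 1, or to a better one. For example, take $f(n)=O(n^k)$ when $l$ is Poly, or $f(n)=O(2^{8n})$ when $l$ is Exp. From the general case bound $f(n)\leq C\,g(n)$ for all $n\geq N$, I would first derive a finite case statement. There is a threshold $N'$ such that for every $n1 \geq N'$ and every $n0 \geq n1$, $f$ lies in $Complexity_{n1..n0}(r)$ for some $r \leq l$. The key step is that, once $n1$ is past the point where the hidden constant has stabilised, $Const_{n1..n0}(f,g)$ is bounded by $C$ and is non-decreasing in $n0$. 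So the ratio in Definition 7 between the constant on $n1..n0$ and the constant on $n1..(n1+n0)/2$ is eventually $1$, which in particular is below $h(n0)$ for the relevant $h$: $1+1/n^2$ for $OC$ and $2$ for PolyRank. This gives apparent complexity $g$ on every such interval, and hence membership at level $l$ or better.

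Given that uniform membership, every $n1 \geq N'$ has an empty defining set in Definition 19. No $z \geq n1$ can satisfy the second conjunct, which requires that $f \notin Complexity_{n1..z}(r)$ for all $r\leq l$. Hence $Explode_{n1}(f,l)=+\infty$ for all $n1 \geq N'$. It follows that only the finitely many $n1 < N'$ can have a finite explosion threshold. This contradicts the hypothesis that infinitely many $n0$ satisfy $Explode_{n0}(f,l)<+\infty$. Therefore $f$ is not in the class for level $l$ or any better one, i.e.\ it belongs to a worse general complexity class.

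I would organise the proof level by level, mirroring items 1--9 of Theorem 1. Each level is defined through $OC$, PolyRank, LogRank or ExpRank with an explicit rank bound, so I need to check that the general case bound gives the matching rank bound on late intervals. For Poly and SemiPoly the allowed rank grows like $1+log(log(n))$ or $1+log(n)$. A fixed general exponent is therefore eventually admissible, which only helps.

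The main obstacle is the first step: going from a general case bound to finite case membership on late intervals $n1..n0$. Definitions 5--7 compare $Const$ on an interval with $Const$ on its first half, and small-$n$ irregularities of $f$ can spoil this. Starting the interval at a large $n1$ is meant to remove those irregularities. I would first try to show that for $n1\geq N$ the constant $Const_{n1..n0}(f,g)$ is monotone in $n0$ and bounded by $C$. A bounded non-decreasing integer sequence is eventually constant, which yields the ratio-$1$ condition. The delicate part is making this uniform in $n1$. If uniformity fails, I would fall back on extracting from the infinitely many finite explosions a sequence of intervals on which the constant grows by more than a fixed factor. I would then show that the product of these factors diverges, contradicting the bounded product $\prod h(n)$ used in the discussion after Definition 8, in the same way as in the omitted proof of Theorem 1.
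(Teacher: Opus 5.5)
The step your whole proof rests on is false. Your strategy is the same contradiction the paper's sketch has in mind: assume $f$ lies in the general class attached to level $l$, and contradict the infinitely many finite explosions.

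From $f\le Cg$ for $n\ge N$ you do correctly get that, for each fixed $n1\ge N$, the integer $Const_{n1..n0}(f,g)$ is non-decreasing and bounded in $n0$. It is therefore \emph{eventually} constant. But Definition 19 makes $Explode_{n1}(f,l)$ finite as soon as a single $z\ge n1$ fails. Since $OC$ tolerates only the factor $1+1/n0^2$, a unit increase of that integer inside the second half of $n1..z$ already destroys membership. Nothing in $f=O(g)$ forces $\lceil f(n)/g(n)\rceil$ to stabilise, yet your phrase ``past the point where the hidden constant has stabilised'' silently assumes it.

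Here is a concrete counterexample. Let $f(n)=100n+\delta(n)$, where $\delta(n)=1$ if $n$ is a power of $2$ and $\delta(n)=0$ otherwise. Then $f$ is non-decreasing and $f(n)\le 101n$. Take $n1=2^k+1$ and $z=2^{k+1}$:
\begin{itemize}
\item $Const_{n1..z}(f,n)=101$ but $Const_{n1..(n1+z)/2}(f,n)=100$, so $f\notin Linear_{n1..z}$.
\item For every $j\le\log z/\log\log z$, the constants of $f$ against $\log^j n$ on $n1..z$ and on its first half differ by a factor above $1.1$. Both constants are at least $75$, so rounding does not matter. Hence $OC_{n1..z}(\log^j n)$ fails, and $f$ is in neither $PolyLog_{n1..z}$ nor $Const_{n1..z}$.
\end{itemize}
Thus $Explode_{2^k+1}(f,Linear)<+\infty$ for all large $k$. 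Your uniform-membership claim fails for infinitely many $n1$, even though $f=O(n)$.

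Your fallback does not rescue this. The explosions sit at different left endpoints and each involves only a ratio like $101/100$. Nothing compounds: in the example the constant never exceeds $101$, so there is no divergent product to exploit.

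In fact the example satisfies the hypothesis of the statement while $f$ belongs to the class $O(n)$ that Theorem 1 attaches to $Linear$. So the statement, read literally with the paper's definitions, fails for $l=Linear$, and no argument can close this gap. The paper's sketch (that no fixed hidden constant can survive) breaks at the same place.

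The boundedness argument you are reaching for works only under a stronger hypothesis, for example infinitely many failures from one fixed left endpoint $n1$. Then each failure of $OC_{n1..z}(g)$ forces a strict increase of $n0\mapsto Const_{n1..n0}(f,g)$ inside $((n1+z)/2,z]$. A given increase serves only finitely many $z$, so the constant would be unbounded, contradicting $f=O(g)$.
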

\begin{myproof} The proof is by contradiction, showing that no fixed constant can exists hidden by the $O$ notation for the corresponding general case complexity class.  We consider it rather trivial and omit it form this paper. \qeds
\end{myproof}
\textit{Discussion}: This theorem gives a direct criterion for excluding a favorable general case complexity for a function $f$, when we know that it behaves poorly in practice and we are also able to reason that there are infinitely many points where it continues to behave poorly. Do note we require that an infinite number of $n0$-s exist. It may be that the function $f$ has smaller finite complexity for any small enough finite interval. However once the upper bound of the interval is allowed to grow sufficiently large, the complexity always explodes.

\begin{mytheo}\label{Theo23}\textbf{Precise determination of general case complexity.}
For any natural function $f$ and any finite complexity level $l$, we have that $f$ belongs to the corresponding general case complexity class under Theorem 1, \textbf{iff} there exists an $n0$, such that $Explode_{n0}(f,l) = +\infty$. \qeds
\end{mytheo}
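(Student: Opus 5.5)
The plan is to prove the two directions separately, using Theorem 1 (with its Corollary) for sufficiency and the contrapositive of Theorem 2 together with a monotonicity argument for necessity. Throughout, ``$f$ belongs to the general case class corresponding to level $l$'' means the $O(\cdot)$ bound that Theorem 1 attaches to that level: $O(1)$, $O(n)$, $O(n^{1+\log\log n})$, and so on.

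\emph{Sufficiency} ($\Leftarrow$). Suppose $Explode_{n0}(f,l)=+\infty$. I will unfold Definition 19. The set whose minimum defines the threshold is empty, so there is no $z \geq n0$ at which $f$ first leaves all levels $r \leq l$. Since the hierarchy of Section 3.3 classifies every natural function on every interval, I will argue by induction on the upper end that for every $n' \geq n0$ there is some $r \leq l$ with $f \in Complexity_{n0..n'}(r)$.

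The inductive step runs as follows. If $f$ were at some level $\leq l$ for all $n'<z$ but above $l$ at $z$, then $z$ would belong to the set, contradicting emptiness. The base case is $n'=n0$; if this fails, I will shift $n0$ upward by one, which is harmless because we only need some $n0$. Having $f$ at level $\leq l$ on every interval beyond $n0$ in particular gives the doubling hypothesis of Theorem 1, in the strengthened form of the Corollary (all $n''$ between $n0$ and $n'$). Applying the appropriate item of Theorem 1 then yields the general case bound. One technical point is that Theorem 1 is stated item by item for a single class, whereas here $f$ may move between levels $r \leq l$ for different $n'$. I will handle this by noting that the bounding functions of Theorem 1 are increasing in the level, so membership at any $r\leq l$ implies the level-$l$ defining inequality, with the constant growth factors still bounded in product.

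\emph{Necessity} ($\Rightarrow$). Suppose $f=O(g_l)$, where $g_l$ is the level-$l$ bound, and that no $n0$ with $Explode_{n0}(f,l)=+\infty$ exists, i.e. $Explode_{n0}(f,l)<+\infty$ for every $n0$. Then there are infinitely many such $n0$, and Theorem 2 gives that $f$ lies in a strictly worse general class, which contradicts $f=O(g_l)$. So this direction is essentially the contrapositive of Theorem 2, together with the observation that ``finite for all $n0$'' implies ``finite for infinitely many $n0$''.

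The main obstacle is the sufficiency direction, specifically making precise that $f=O(g_l)$ really does place $f$ at level $\leq l$ on all sufficiently late intervals. The definitions use the ratio $Const_{n1..n0}/Const_{n1..(n1+n0)/2}$. For a genuinely $O(g)$ function this ratio tends to $1$ only in the sense that the minimal constant eventually stabilizes, so one must choose $n0$ beyond the point where $Const(f,g)$ on $[n0,\cdot]$ has reached its supremum. Doing so requires replacing the lower endpoint $1$ by $n0$ (this is why the statement quantifies over $n0$ as the left endpoint $n1$). I would first try taking $n0$ large enough that $f(n)/g(n)$ is within a factor $1+1/n0^2$ of its $\limsup$ for all $n\ge n0$, and check that this satisfies each of the factor-$h$ conditions in Definitions 7--18.
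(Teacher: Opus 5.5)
Your sufficiency direction is the route the paper's one-line sketch intends: unfold Definition 19, then reduce to the doubling argument of Theorem 1. The necessity direction, however, has a genuine gap.

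\textbf{Necessity.} Taking the contrapositive of Theorem 2 commits you to a stronger claim: that $f=O(g_l)$ places $f$ at level $\le l$ on $[n0..z]$ for \emph{all} sufficiently large $n0$. You restate this claim in your last paragraph, which you mislabel as the sufficiency obstacle. The claim is false.

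Let $f(n)=2\cdot 4^j$ for $4^j\le n\le 2\cdot 4^j$ and $f(n)=n$ for $2\cdot 4^j<n<4^{j+1}$. This $f$ is non-decreasing with $f\le 2n$. Yet for $n0=2\cdot 4^j+1$ and $z=4^{j+1}$ one has $Const_{n0..z}(f,n)=2$ while $Const_{n0..(n0+z)/2}(f,n)=1$, so $f\notin Linear_{n0..z}$. A direct check shows $f$ is not in $Const_{n0..z}$ or $PolyLog_{n0..z}$ either. Hence $Explode_{n0}(f,Linear)<+\infty$ for infinitely many $n0$, although $f=O(n)$.

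Two consequences follow. In the reading you use, Theorem 2 is false and cannot carry this direction. Your proposed $n0$ (ratio within $1+1/n0^2$ of its $\limsup$ for all $n\ge n0$) also does not exist for such oscillating $f$.

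What the statement needs is a single good $n0$, chosen where the rounded ratio already attains its tail supremum. The sequence $S(m)=\sup_{n\ge m}\lceil f(n)/g(n)\rceil$ is non-increasing, integer-valued and eventually bounded. So it is eventually equal to some $L$, and $L$ is attained at infinitely many $n$. Take $n0$ past the stabilisation point with $\lceil f(n0)/g(n0)\rceil=L$. Then $Const_{n0..z}(f,g)=L$ for every $z\ge n0$, so the $OC$ condition holds with factor $1$ on every $[n0..z]$.

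This settles level $Linear$. For the rank-defined levels you must push the same choice through the rank definitions, and at level $Poly$ this cannot work. Take $f(n)=\lceil n^{1+\log\log n}\rceil$. It lies in the class $O(n^{1+\log\log n})$ that Theorem 1 attaches to $Poly$. But once $z$ is large compared with $n0$, the ratio $Const_{n0..z}/Const_{n0..(n0+z)/2}$ against $n^{k-1}$ exceeds $2$ for every $k\le 1+\log\log z$. So $Explode_{n0}(f,Poly)<+\infty$ for every $n0$, and the statement itself must be restricted there.

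\textbf{Sufficiency.} Your fix for level-mixing is not valid. Membership at a level $r<l$ does not imply the level-$l$ inequality. Each class is a ratio condition $Const_{n0..z}\le h\cdot Const_{n0..(n0+z)/2}$ for its own comparison function, and such conditions are not monotone in that function.

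For instance, take $f\equiv C$ on $[2..M]$ (say $C=10$) and $f(n)=C\lfloor\log^k n\rfloor$ on $(M..2M-2]$, with $M$ chosen so that $\log^k M\approx 0.9M$. This $f$ lies in $PolyLog_{2..2M-2}$, because the maximum of $f/\log^k$ sits at $n=2$, where $\log 2=1$. Meanwhile $Const(f,n)$ is $\lceil C/2\rceil$ on $[2..M]$ but about $0.9C$ on $[2..2M-2]$, so $f\notin Linear_{2..2M-2}$.

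You would therefore have to bound directly how much $f/g_l$ can grow across a doubling step where only a lower level's condition is known. Neither Theorem 1 nor the paper's sketch provides this.

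Two smaller points. First, Theorem 1 is stated for intervals anchored at the minimal input, whereas $Explode_{n0}$ gives you intervals $[n0..z]$. Rather than citing Theorem 1, rerun its chain along $z_{k+1}=2z_k-n0$. Then $z_k$ is the midpoint of $[n0..z_{k+1}]$, and the factors $h(z_k)$ multiply exactly as in Theorem 1. Second, no shift of $n0$ is needed, or allowed, in your base case. If the base case failed, $z=n0$ would itself lie in the set of Definition 19.
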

\begin{myproof} The proof consists of direct application of the definition of $Explode$ to reduce to a proper application of Theorem 1. We consider it trivial enough to be omitted from this paper. \qeds
\end{myproof}

The three theorems above provide for a method for converting between finite and general case complexity classes, when possible. Note that we may not always know enough about a function to be able to apply either of them. Also, even if a function belongs to a favorable general case complexity class, this does not mean a problem admiting an algorithm with such a running time is solvable in practice: its threshold $Collapse_{+\infty}(f,l)$ may be beyond the largest instances of practical interest. Analogously, even if a function belongs to a certain unfavorable general case complexity class or worse, the corresponding problem may still be very solvable for all cases of practical importance. The minimum $n1$ such that $Explode_{n1}(f,l) < +\infty$ may be larger than the maximum size of practical instances.

\subsubsection{From general case complexity to finite case}\label{Sec412}

\begin{mytheo}\label{Theo24}\textbf{General case complexity implies finite case after some threshold.}
For any natural function $f$ which has general complexity $O(g(n))$ and $\Omega(h(n))$, the following statements are true:
\begin{enumerate}[(1)]
\item There exists some $n0$, such that for all $n'$, $n' \geq n0$, we have $f(n) = OC_{n'}(g(n))$.
\item For any finite complexity class level $l$ with growth rate no smaller than $g(n)$, we have that $Collapse_{+\infty}(f,l) < +\infty$ and also that there exists some $n0$, such that $Explode_{n0}(f,l) = +\infty$.
\item For any finite complexity class level $l$ with growth larger than $h(n)$, we have that for any $n0$, $Explode_{n0}(f,l) < +\infty$. 
\end{enumerate}
By growth rate of functions in a finite complexity class $l$, we refer to the asymptotic growth rate of the formula representative of such class, given by the Definitions in Section 3.2. \qeds
\end{mytheo}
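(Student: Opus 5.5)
The plan is to prove the three statements in order, each by unwinding the definitions of Section 3.2 and applying the asymptotic hypothesis.

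For (1), start from $f(n)=O(g(n))$. This gives a constant $C$ and a threshold $N$ with $f(n)\le C\,g(n)$ for all $n\ge N$. The minimal constant $Const_{n1..n'}(f,g)$ of Definition 6 is therefore non-decreasing in $n'$ and bounded. On the finite range $n1..N$ it equals some fixed value $c_N$, and beyond $N$ it never exceeds $\max(c_N,C)$. A non-decreasing integer-valued bounded sequence is eventually constant. So there is an $n0$ after which $Const_{n1..n'}(f,g)$ and $Const_{n1..(n1+n')/2}(f,g)$ coincide for every $n'\ge n0$: just take $n0$ so that $(n1+n0)/2$ already exceeds the stabilisation point. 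Their ratio is then exactly $1\le 1+1/n'^2$, and Definition 8 yields $f(n)=OC_{n'}(g(n))$. The point to stress is that stabilisation of an integer-valued bounded monotone sequence is what makes the very tight factor $1+1/n^2$ attainable.

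For (2), fix a level $l$ whose representative formula grows at least as fast as $g$. By (1), and by the same stabilisation argument applied to the rank-defining bounds of Definitions 9, 10 and 15, there is a threshold beyond which the ranks $PolyRank$, $LogRank$ and $ExpRank$ of $f$ are eventually fixed. Their fixed values are no larger than those of the representative of level $l$. Hence $f\in Complexity_{n1..n0}(r)$ for some $r\le l$ for all large $n0$, so $Collapse_{+\infty}(f,l)$ is finite. Taking $n1$ to be this threshold, the second conjunct in the defining set of $Explode_{n1}(f,l)$ can never hold, so that set is empty and $Explode_{n1}(f,l)=+\infty$. Theorem 3 can also be invoked as a cross-check: it is exactly the equivalence between $Explode=+\infty$ and membership in the general class of Theorem 1.

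For (3), argue by contradiction using Theorem 3. Suppose $Explode_{n0}(f,l)=+\infty$ for some $n0$, where the level-$l$ representative grows more slowly than $h$. There are two ways the defining set can be empty. In the first, $f$ stays in level $l$ or better forever, and then Theorem 1 (via Theorem 3) gives $f=O(\text{rep}_l)=o(h)$. This contradicts $f=\Omega(h)$. In the second, $f$ has already left level $l$ at some finite point, which is the case where the explosion was witnessed earlier. This case must be handled by choosing the minimal $z$ relative to $n1$. I expect it to be the main obstacle, because $Explode$ as written requires membership for all $n0<z$ starting from $n1$. I would address it by noting that if $f$ is outside level $l$ already at $n1$, the statement is read with the convention that the explosion occurs at $z=n1$.

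A second delicate point runs throughout. The class boundaries in Definitions 12 to 14 depend on $n$ itself (e.g.\ $1+\log\log n$). To say that a level "grows faster than $h$", one must first fix $n$ on the interval and evaluate the bound at $n0$. I would make this convention explicit before the three parts.
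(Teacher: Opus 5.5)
\textbf{Part (1).} Your part (1) is correct, and it is presumably what the paper's one-line sketch (``by contradiction, from the definitions'') has in mind. $Const_{n1..n'}(f,g)$ is non-decreasing, integer-valued and bounded, hence eventually constant, so the $1+1/n'^2$ test passes with ratio $1$. You need $g>0$ wherever $f>0$ for $Const$ to be finite.

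\textbf{The gap is in (2).} The stabilisation argument needs a \emph{fixed} comparison function that bounds $f$. The ranks of Definitions 9, 10 and 15 instead compare $f$ with $n^{k-1}$, $\log^k n$, $2^{n/k}$ for varying $k$, against thresholds that move with $n0$, and they are not eventually fixed. For $f=n^2$, $LogRank_{n1..n0}(f)$ cannot eventually equal any $k$: chaining the $1+1/n^2$ factors along doublings, exactly as in your (1), would give $n^2=O(\log^k n)$. Similarly $n^2=OC(2^{n/k})$ holds eventually for every fixed $k$, so $ExpRank$ keeps moving.

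Your next inference, ``the ranks of $f$ are no larger than those of the representative of level $l$, hence $f$ lies in level $\le l$'', also fails, for two reasons.
\begin{enumerate}
\item These ranks measure how much $Const$ grows from midpoint to endpoint, not how large $f$ is. So $f\le Cg$ does not give $\mathrm{rank}(f)\le\mathrm{rank}(g)$. The monotone step function equal to $16^j$ on $[4^j,4^{j+1})$ lies below $n^2$, yet has $PolyRank_{1..4^j}(f)=3$, while $n^2$ has $2$.
\item More seriously, a level's representative need not lie in that level. Take base-2 logarithms and $g(n)=n^{1+\log\log n}$, the growth Theorem 1(6) attaches to $Poly$. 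At $n0=2^{16}$ one has $g\le n^5$ on $[2..n0]$, while $Const(g,n^4)$ rises from about $1.2\cdot 10^4$ at the midpoint to $2^{16}$ at $n0$. So $PolyRank_{2..n0}(g)=6>5=1+\log\log n0$, and the same happens at every $n0=2^{2^j}$.
\end{enumerate}
Thus for $f=g$ and $l=Poly$ you get $Explode_{n1}(f,Poly)<+\infty$ for every $n1$. So (2) as stated fails at the boundary. Invoking Theorem 3 does not rescue it, since the same $g$ contradicts that theorem's forward direction. A correct version needs an explicit margin between $g$ and the threshold defining level $l$, plus a direct bound on the specific $Const$-ratio in that level's definition.

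There is a further problem. Even membership on $[n1..n0]$ for all large $n0$ does not give $Explode_T(f,l)=+\infty$ by ``taking $n1$ to be this threshold''. That requires membership on every $[T..z]$ with $z\ge T$, short intervals included, and moving the left endpoint changes every $Const$ involved.

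\textbf{Part (3).} You silently reversed the hypothesis: the statement has level $l$ growing \emph{faster} than $h$, while you assume it grows \emph{more slowly}. Yours is the only provable reading. As written, (3) fails for $f=h=n$ and $l=Poly$: $Const_{n1..z}(n,n)=1$ for all $z$, so $n\in Linear_{n1..z}$ throughout and $Explode_{n1}(n,Poly)=+\infty$. But the correction must be made explicit.

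With that correction, your contradiction through Theorem 3 (equivalently Theorem 1) follows the paper's intended route. One step still needs justification: Theorem 1 is stated for $f$ remaining in one fixed level, not in ``some level $\le l$''.

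The obstacle you anticipate is not there. Since $Explode$ is a minimum, the least $z\ge n1$ at which $f$ lies in no level $\le l$ on $[n1..z]$ automatically satisfies the first conjunct of Definition 19. So $Explode_{n1}(f,l)=+\infty$ exactly when $f$ lies in some level $\le l$ on every $[n1..z]$. There is no second way for the set to be empty, and your convention for $z=n1$ is just the literal definition.
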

\begin{myproof} The proof is by contradiction, following the direct application of the definitions of general asymptotic growth notations. We consider the proof trivial enough to be omitted. \qeds
\end{myproof}

\textit{Discussion}: This theorem states that having determined some bounds for the general case complexity, these bounds will characterize the finite case as well after some threshold.

\subsubsection{Relationships between finite case complexity classes}\label{Sec413}

Relationships between finite complexity classes of different problems are trickier to characterize than in the general case. This is because in case of reduction from one problem to a number of applications of some others, it actually matters precisely how many applications there are of each of those others are used and also what the input size (difficulty) bounds for those are. As such, for example a polynomial number of applications of a solution of complexity class $Poly_{n0}$ might very well result in the $PolyRank$ of the main algorithm to exceed the $1+log(log(n))$ upper-threshold for the $Poly_{n0}$ class. Nevertheless, for a particular candidate algorithm we can definitely characterize its actual complexity with regard to the problems to which the solution is reduced, using the $LogRank$, $PolyRank$ and $ExpRank$ functions.

\begin{mytheo}\label{Theo25}\textbf{Reductions between finite case problems.}
The following statements are true::
\begin{enumerate}[(1)]
\item Up to $n^k$ applications of an algorithm of $PolyRank_{n0} = j$ results in an algorithm of complexity $PolyRank_{n0} \leq k+j$.
\item Up to $log^k(n)k$ applications of an algorithm of $LogRank_{n0} = j$ results in an algorithm of complexity $LogRank_{n0} \leq k+j$.
\item Up to $2^{n/k}$ applications of an algorithm of $ExpRank_{n0} = 1/j$ results in an algorithm of complexity $ExpRank_{n0} \leq 1/k+1/j$. \qeds
\end{enumerate}
\end{mytheo}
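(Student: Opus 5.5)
The plan is to read each statement as a statement about the hidden constants of Definitions 5--7, and to compose the cost bounds directly. In each of the three parts we model the composite algorithm's running time as $F(n) \leq m(n) \cdot f(n)$. Here $m(n)$ is the number of applications ($n^k$, $\log^k(n)$ or $2^{n/k}$). We assume each application is on an input of size at most $n$ and use the monotonicity of natural functions, so the cost of one application is at most $f(n)$. The bookkeeping overhead of issuing the calls is absorbed into the same constant. The whole proof is then the observation that multiplying by $m(n)$ shifts the exponent in the representative function $g$ exactly, and leaves the tightest constants $Const$ of Definition 6 controlled.

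For part (1), the hypothesis $PolyRank_{n0}(f)=j$ gives, by Definition 9, $f(n) \leq c\, n^{j-1}$ on the interval with $c = Const_{1..n0}(f,n^{j-1})$. It also gives the apparent-growth condition $Const_{1..n0} \leq 2\, Const_{1..n0/2}$. Multiplying by $n^k$, we get $F(n) \leq c\, n^{k+j-1}$. Hence $Const_{1..n0}(F, n^{k+j-1}) \leq c$, and on the half interval the constant for $F$ is at least the one for $f$, up to the same factor. The doubling condition (factor $h=2$) should then transfer from $f$ to $F$ with the same exponent of $n$ shifted by $k$. This yields $PolyRank_{n0}(F) \leq k+j$, since the rank is the least index satisfying condition (i).

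Part (2) is identical with $n^{k}$ replaced by $\log^k(n)$ and Definition 10 (factor $h=1+1/n^2$) in place of Definition 9. Part (3) uses $2^{n/k}\cdot 2^{n/j} = 2^{n(1/k+1/j)}$ together with Definition 15. Here we must take care that $ExpRank$ is defined only on reciprocals of integers. We therefore read the conclusion $\leq 1/k+1/j$ as bounding the exponent fraction, rounding to the nearest admissible $1/k'$ from above.

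The main obstacle is that the apparent-complexity conditions compare constants on $1..n0$ against those on $1..n0/2$. The product $m(n) f(n)$ need not have its worst ratio at the same $n$ as $f$ alone, so the $Const$ of the product is not literally the $Const$ of $f$. I would handle this by showing that, when $g$ is multiplied by exactly the same $m(n)$ as $f$, the ratio $F(n)/(m(n)g(n)) = f(n)/g(n)$ is pointwise unchanged. Consequently $Const_{a..b}(F, m g) = Const_{a..b}(f,g)$ on every interval, and both conditions in the rank definitions carry over verbatim. This, however, assumes the number of applications is exactly $m(n)$ with each at full size $n$. For \emph{up to} $m(n)$ calls on smaller inputs, only the upper inequality survives. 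The conclusion is then stated as ``$\leq$'' rather than equality, which matches the theorem's wording.
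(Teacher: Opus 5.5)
You handle the exact-count case correctly, and it is the paper's ``symbolic multiplication'' made precise. If $F=m\cdot f$ with $m(n)>0$ on the interval, then $F\le c\,mg$ iff $f\le c\,g$ pointwise. So every $Const$ is unchanged, and, for example, $PolyRank_{n0}(F)=k+j$ holds with equality.

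The gap is your final step. You pass to ``up to'' $m(n)$ applications on the grounds that only the upper inequality $F\le mf$ is needed, but the rank conditions are not monotone under pointwise domination. Definition 7 compares $Const$ on $1..n0$ with $Const$ on the first half. The bound $F\le mf$ controls only the former and gives no lower bound on the latter. This is also why your earlier claim that ``on the half interval the constant for $F$ is at least the one for $f$'' is unjustified. Lowering $F$ on the lower half can only make the condition harder to meet.

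A concrete counterexample to (1):
\begin{itemize}
\item Take $f(n)=n^j$ with $j\ge 2$ and $n0\ge 3$, so $PolyRank_{n0}(f)=j$.
\item Let the composite algorithm make no calls on inputs in the first half $1..(1+n0)/2$ and $n^k$ full-size calls beyond it. Then $F(n)=1$ on the first half and $F(n)=n^{k+j}$ beyond it.
\item This $F$ is a natural function with $F\le n^kf$.
\item Yet $Const_{1..(1+n0)/2}(F,n^{k+j-1})=1$ while $Const_{1..n0}(F,n^{k+j-1})=n0$, so $F\neq O^{2}_{n0}(n^{k+j-1})$.
\item On the other hand $F=O^{2}_{n0}(n^{k+j})$, since both constants equal $1$.
\item Hence $PolyRank_{n0}(F)=k+j+1>k+j$.
\end{itemize}
Your treatment of (2) and (3) rests on the same inference and has the same defect.

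So the ``up to'' reading cannot be proved at all. What is true, and all the paper's one-line argument really covers, is the statement for the bounding function $m(n)f(n)$, equivalently for exactly $m(n)$ full-size applications. You should state the result in that form rather than deduce ``$\le$'' for the actual running time.

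Separately, in (3) your rounding convention only gives $ExpRank_{n0}(F)\le 1/k'$, where $1/k'$ is the least unit fraction $\ge 1/k+1/j$. This is weaker than the stated bound whenever $1/k+1/j$ is not a unit fraction. For instance, with $f=2^{n/3}$ and $k=2$, under your unit-fraction reading and for large $n0$, you get $ExpRank_{n0}(F)=1>5/6$. Also, the passage from $2^{n(1/k+1/j)}$ to $2^{n/k'}$ needs a monotonicity-in-the-exponent property of $OC$ that you have not argued.
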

\begin{myproof} Again we omit proofs as they are mere algebraic symbolic multiplications of the formulas corresponding to the definitions of the respective classes. \qeds
\end{myproof}

There are of course more interesting reduction theorems. The core aspect for reductions within some finite interval $n1..n0$ is for the result to allow the candidate algorithm to fit the definition of a certain complexity class. This generally involves computing the new $LogRank$, $PolyRank$ or $ExpRank$ values. Theorems describing relationships between lower and higher finite complexity classes are also interesting. We leave such questions for further research.

\begin{myobs}\label{Obs26}\textbf{Variable finite complexity.}
For any complexity class level $l$, and any (potentially infinite) sequence of increasing natural numbers $a_1,a_2,...$ there exists a natural function $f$, such that for every $i$, we have $Explode_{a_i}(f,l)=a_{i+1}$. \qeds
\end{myobs}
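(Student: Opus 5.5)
We will build $f$ piece by piece along the sequence $a_1<a_2<\dots$, making it a monotonically non-decreasing natural function. On each window $[a_i,a_{i+1})$ it will look like a tame function, and at the point $a_{i+1}$ it will take a sudden jump. The observation concerns only the windows $a_i..n0$ with the left end fixed at $a_i$ (Definition 19 uses $Complexity_{a_i..n0}$). So it suffices that two things hold for every $i$:
\begin{enumerate}[(a)]
\item for every $n0<a_{i+1}$, $f$ restricted to $a_i..n0$ lies at level at most $l$;
\item on $a_i..a_{i+1}$ it lies strictly above level $l$.
\end{enumerate}
Here $Explode_{a_i}(f,l)$ is the least such $z$. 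Property (a) gives that no $z<a_{i+1}$ qualifies, since those windows are at level $\le l$. Property (b) gives that $z=a_{i+1}$ itself qualifies. For a finite sequence ending at $a_m$, we simply leave $f$ tame beyond $a_m$, consistent with the value $+\infty$ there. If $l$ is the top level $Intr$, nothing can exceed it and the statement is vacuous or false. We will therefore assume $l$ is below $Intr$, which is also the implicit reading in the paper.

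Concretely, we first set $f(n)=C_i$ on $[a_i,a_{i+1})$, a constant on each window, with $C_1=1$. A constant function on a window has $Const_{a_i..n0}(f,g)$ stable between the midpoint and the endpoint for the representative $g$ of every class. Thus it satisfies the $OC$ and $O^{2}$ conditions of Definitions 8--10 for the lowest admissible rank. Whenever $l\ge Const$, this places it in $Const$, or in the lowest class the paper's hierarchy actually assigns, and hence at level $\le l$. This gives (a).

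For (b), we choose $C_{i+1}$ enormous relative to $C_i$, say $C_{i+1}\ge C_i\cdot 2^{8a_{i+1}}\cdot 4$. Then on the window $a_i..a_{i+1}$ the ratio $Const_{a_i..a_{i+1}}(f,g)/Const_{a_i..(a_i+a_{i+1})/2}(f,g)$ is essentially $C_{i+1}/C_i$ for every representative $g$. This ratio exceeds $1+1/n^2$ and $2$, so every $OC$ and $O^2$ membership fails. This forces $LogRank=+\infty$ and pushes $PolyRank$ beyond $1+\log n$. It also blocks $ExpRank\le 8$, or at any rate pushes the function above every level $\le l$. Monotonicity of $f$ is preserved, since the $C_i$ increase.

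The main obstacle is the idiosyncratic form of the class definitions. Membership depends on a midpoint-versus-endpoint constant comparison, and $Const$ and $PolyLog$ have extra side conditions, for example $LogRank\le 1$. So one must check carefully that the step function really lands at level $\le l$ on every short window, and above $l$ on the window that includes the jump. We will handle the first part by picking the simplest admissible representative class, and check degenerate windows where the midpoint equals $a_i$ separately. The jump size is a free parameter, so (b) can always be forced by taking $C_{i+1}$ larger still.
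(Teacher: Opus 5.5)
Your construction is the one the paper itself points to. The paper gives no proof, only the remark after the Observation that a function growing extremely fast on successive powers of $2$ and very slowly in between is an example. Your reduction of $Explode_{a_i}(f,l)=a_{i+1}$ to (a) and (b), the size of the jump used for (b), and the exclusion of $l=Intr$ are all sound.

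The step that fails as written is (a) at the two lowest levels. Under Definition 10, a constant $C$ on a window $n1..n0$ satisfies $f=OC_{n1..n0}(\log^{k-1}n)$ for \emph{every} $k\ge 1$. This is because $\lceil C/\log^{k-1}n\rceil$ is maximized at $n=n1$, which belongs to both the full and the half window. So clause (ii) never holds, $LogRank=+\infty$, and your flat pieces lie in neither $Const$ nor $PolyLog$. (They are in $Linear$, because $\lceil C/n\rceil$ is likewise maximized at $n1$.) Your fallback ``or in the lowest class the paper's hierarchy actually assigns, and hence at level $\le l$'' is therefore a non sequitur when $l$ is $Const$ or $PolyLog$. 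Handling degenerate windows separately cannot rescue it either. On the one-point window $a_i..a_i$ the full and half windows coincide, so every $f$ is $OC$ of every $g$, has $LogRank=+\infty$, and lies outside $Const$ and $PolyLog$. Hence (for $a_i\ge 3$, say) $z=a_i$ already satisfies the condition of Definition 19, and $Explode_{a_i}(f,l)=a_i$ for $l\in\{Const,PolyLog\}$ no matter how $f$ is chosen.

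So, read literally, your argument (with the flat pieces placed in $Linear$) proves the Observation for $l$ from $Linear$ through $Exp$. At $Const$ and $PolyLog$ the statement is false rather than merely unproved. To cover those two levels, you must say explicitly that you replace the literal definitions by their evident intent. For example, read Definition 10 as ``$LogRank$ is the least $k\ge 1$ with $f=OC(\log^k n)$''; this puts flat and one-point windows at $LogRank$ $1$ (hence in $Const$) and leaves (b) untouched.

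A smaller inaccuracy in (b): for increasing $g$ the full-to-half ratio is $\max\bigl(1,\lceil C_{i+1}/g(a_{i+1})\rceil/\lceil C_i/g(a_i)\rceil\bigr)$, not essentially $C_{i+1}/C_i$. For $g=\log^k n$ with $k$ large this ratio drops back to $1$, so $LogRank$ on the jump window can be a large finite number rather than $+\infty$. It is still far above $\log n/\log\log n$, which is all (b) needs.
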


\textit{Discussion}: It can be that a function has infinitely many points where its finite complexity is small enough, however it never collapses permanently to such a favorable case. A function which grows extremely fast on successive powers of 2, but very slowly in-between is one such example. In practice, it can be that a problem's optimal complexity varies wildly from one input size (difficulty) to another, within the bounds of some general-case complexity (if such bounds exists).

Finally, when solving a problem on the finite case, consulting the data in Annex 1 which estimates the highest upper bounds for tractability for various finte complexity classes can prove of general interest.

\subsection{Automated and assisted solving of computer science problems on the finite case}\label{Sec42}

In this subsection we present basic results regarding finite complexity classes of computer science problems. In Section 3.2 we introduced the concept of finite complexity for natural functions. Here we apply it to refer to functions which describe the running times of algorithms and respectively the sizes of hints.

Since Sections 4.1.1-4.1.3 refer to relationships between finite and general case complexity of natural functions in general, the results there apply to such representing the running time of an algorithm as well as representing the size of its hint. We adopt the notation $f(n)/g(n)$ from general complexity classes to reason about finite complexities analogously (e.g. $P/Poly$ translates to $Poly_{n0}/Poly_{n0}$) . When solving a problem for the finite case interval $n1..n0$, we consider the actual program itself, $S$, to be of constant size, as per Definition 4. As argued before, a program of fixed size exists which solves all intervals $n1..n0$ for some potentially distinct and perhaps very large $Hint_{n0}$-s: the one which includes an universal machine simulator.

\subsubsection{General Approach}\label{Sec421}

The manner in which we choose to split the actual algorithm for a finite case problem between the fixed part and the hint is as much art as it is science. Knowledge of the problem domain as well as trial and error may lead to various choices in this regard. Nevertheless, the value of finite algorithmics lies in the conjecture that some problems do not admit a sufficiently efficient algorithm for the general case (or that identifying such is not possible), but do in fact admit some (maybe distinct) algorithms for any finite universe of inputs.

In our quest to identify a practical solution to a finite case problem on interval $n1..n0$, or determine that such does not exists, we can take the following approach.

\begin{myapp}\label{App25}\textbf{Automatic Solving of a Problem.}
Given some problem $Prob$, specified by inputs 1-7 described in Section 3.4, we construct an algorithm to solve it by taking the following rough steps:
\begin{enumerate}[(1)]
\item \textbf{Consider some enumerable (potentially finite) family $F$ of hinted algorithms $S$}. This family can be specific to the problem domain of $Prob$: it can essentially describe ``what we can expect the source code of some algorithm which solves it to look like". Each algorithm in this family is hinted, as per Definition 4.
\item \textbf{Maintains some internal state $s$, describing the status of the search for a solution}. This state can be of rather large size, so long as it fits the space complexity bounds imposed on the automatic solving algorithm itself. It may consist of the following:
\begin{enumerate}[(a)]
\item Promising Algorithms and methods of Hint generation.
\item Algorithms and Hints which are adequate for some specific input sizes (difficulties).
\item Statistics on promising algorithms, hints and hint generation methods collected in step (3)(e) below.
\item Information analogous to points (a)-(c) but with regard to inadequate algorithms / hints / hint generation methods.
\end{enumerate}
\item \textbf{While a solution is not found, explore some more, by taking the following steps}:
\begin{enumerate}[(a)]
\item Pick some algorithm $S$ from the family $F$, based on the internal state $s$.
\item Choose a potential hint, $hint_S$ for $S$, from the finite set of potential hints, by some method $GEN_S$.
\item Evaluate $S$ preliminarily by running it on several new inputs within the relevant domain, using hint $hint_S$. For example, S can be run on Golden Data tests, for increasing input sizes (difficulties) from $v1$ up to $v4$. 
\item If $S$ takes longer than some upper bound for running time on a particular input, or its course of execution seems unpromising, halt it (to be potentially resumed later). Note that given the theoretical framework concerning finite complexity classes, for a given input size (difficulty) any desired complexity translates to a precise upper bound on the running time (i.e. number of operations of the algorithm). The constant is never ``hidden" in finite algorithmics.
\item Collect the following data with regard to the execution of algorithm $S$ with $hint_S$ on the relevant test cases:
\begin{enumerate}[(i)]
\item Running Time and Space Consumed for some input.
\item Statistics regarding Input/Output correlations. These can include error rates, such as Specificity and Sensitivity.
\item Full or partial snapshots of its internal memory state at various runtime moments.
\end{enumerate}
\item Use the data collected above to update the internal state $s$. Naturally, the data could be summarized or aggregated before or after updating the internal state, in order to reduce its volume.
\end{enumerate}
\item \textbf {When a sufficiently adequate pair of algorithm $S$ and hint generation method $GEN_S$ are identified, output them}. This entails outputing the respective source codes. \qeds
\end{enumerate}
\end{myapp}

\textit{Discussion}: The approach essentially considers algorithms and hints in some arbitrary order, tests them on available inputs and finally chooses the first one which is adequate enough. The art of producing an efficient implementation of this approach for some problem domain lies particularly in identifying some good manner in which to choose this arbitrary order.

By introducing Approach 1 we are effectively shifting attention of researchers from focusing solely on understanding correlations between input/correct output pairs within a problem domain, to focusing on correlations between structure and hints of algorithms to their relative performance / adequacy with regard to that domain. This approach can be reminiscent of the field of AI and Machine Learning.

For Step 1 of Approach 1 above, the following family of algorithms can be considered.

\begin{myapp}\label{App26}\textbf{Choosing the family of algorithms for a problem domain.}
Given some object-oriented programming language grammar (e.g. C\#,Typescript, etc.), consider only source codes (algorithms) which satisfy the following conditions:
\begin{enumerate}[(1)]
\item They include as reference some or all types corresponding to data structures and algorithms included in part 5 of the input, as described in Section 3.4. These can be restricted to only what researchers believe to be relevant to the problem domain.
\item They define at most 20 new types.
\item Each defined type contains most 20 public methods.
\item Each defined type contains at most 20 private methods.
\item Each defined type contains at most 20 internal variables. They can be collections such as lists or dictionaries over other types.
\item Each defined type contains at most 20 ``magic constants", which are actual values of some type.
\item Each method takes at most 20 parameters, all typed.
\item Each method defines at most 20 local variables (for all levels of nesting).
\item Each executable line of code, consists of one of the following:
\begin{enumerate}[(a)]
\item A $statement$, which can be either of:
\begin{enumerate}[(i)]
\item An $assignment$ to some variable in scope from the result of the evaluation of an $expression$ over some of the variables in scope.
\item An $evaluation$ of an $expression$ over some of the variables in scope without storing the result.
\item A loop-related execution flow control directive, such as $break$ or $continue$.
\end{enumerate}
By $expression$ above we understand to include invocations of public member methods on some of object stored in a variable in scope. Also, building parameter tuples using operators such as comma are also included.
\item A conditional branching of the form \textbf{IF($expression$) then $codeblock$ else $codeblock$}.
\item A loop of the form \textbf{WHILE($true$) $codeblock$}. This allows for both initial and final loop condition checking, via inclusion of an appropriate \textbf{IF} statement.
\item A return directive of the form \textbf{RETURN($expression$)}. 
\end{enumerate}
A $codeblock$ is understood to be a sequene of executable lines of code.
\item The expressions which occur throughout all methods are defined globally. Source code within methods uses them by specifying their corresponding index in the global list. The operators within an expression can be only names of methods on the underlying types, or the parameter list builder (e.g. comma). All types are boxed. As such, for example $a+b$ is represented as $Add(a,b)$, where $Add$ is a member method of the boxed number type.
\item There are at most 20 ``heavy" expressions globally, defined on between 6 and 20 variables.
\item There are at most 400 ``light" expressions globally, which are defied on at most 5 variables
\item Values are passed by reference to invoked methods. Thus, their actual value can be changed by the method if it so choses. Non-destruction of the input can be achieved via cloning.
\item No method has maximum nesting level above 5 (i.e. \textbf{IF} contained within another \textbf{IF} contained within a \textbf{WHILE}, and so on). Some algorithms with higher maximum nesting level can be rewritten to use private method invocations.
\item There are at most 20 ``heavy" methods globally, which have an imbrication level of either 4 or 5. All the others have imbrication level at most 3.
\item There are at most 400 executable lines of code for any single method body.
\item There are at most 20000 lines of code for the entire algorithm.
\item There are at most 400 global magic constants, besides the ones allotted to each type. 
\item There exists a single type, which is the actual Solver for the problem at hand which defines the following interface:
\begin{enumerate}[(a)]
\item Initialize(hint) - a method which initializes the solver with the corresponding hint, allowing any required precomputations to be performed.
\item Query(instance) - a method which returns correct output for a specified instance of the problem. \qeds
\end{enumerate}
\end{enumerate}
\end{myapp}
\textit{Discussion}: The above family of fixed algorithms is understood by us, the author, to include essentially everything that a human researcher could reasonably discover on his own with regard to any problem domain. In fact, to the best of our knowledge, almost all (if not all) currently known algorithms for solving any general case problem can be mapped to some member of this family. The value 20 which appears repeatedly was chosen arbitrarily to provide a rather generous upper bound. It is most likely that a lower value, like 5 will still allow sufficient coverage of everything which could be of interest to researchers. Furthermore, for specific problem domains, just a fixed algorithm of much smaller sophistication can be hypothesized to solve the problem on sufficiently large input sizes (difficulties), given the appropriate hint. In that case, the quest for a solution simplifies to the quest for a proper hint.

Note that the family of algorithms described in Approach 2 is universal. Namely, it includes algorithms which simulate a registry machine. Given that the algorithms are hinted, it is possible to effectively circumvent any limits placed on nesting depth, source code size or anything of the like, simply by moving the actual algorithm to the hint. This runs somewhat contrary to the manner in which we suggest that this approach has value. The hint should be specific to the problem domain. For some problems (like 3CNF-SAT for example) it could include some bending of this rule, like allowing for formulas specific to a particular input size to be evaluated in the context of conditional branching. For some very hard problems, it could be that allowing the program to modify itself by essentially incorporating parts of the hint or of the problem instance in its de-facto code might prove interesting avenues for exploration. Such problems might include simulation of human consciousness, resolving the halting problem for large instances and perhaps others not yet considered by humanity. Nevertheless, such flexibility should most likely not be the initial main focus of research within finite algorithmics.

Once a candidate fixed algorithm $S$ which shows sufficient promise on small problem instances has been identified - for example one which works correctly and efficiently on all inputs from a large test battery - , the only remaining issue is to determine a suitable hint for it for instances of desired size. The following approaches can be taken either alone or in combination for doing this:
\begin{itemize}
\item \textbf{Exhaustive Hint enumeration}. If a convenient finite complexity class is suspected for the hint size with relation to input size, simply enumerate all possible hints. This could also be the starting point for problems where we are essentially clueless as to what a proper hint might be. For small enough instances, the Deterministic Finite Cover Automata \cite{cit08} which correctly recognizes the finite language of correct input/outputs pairs for all such can be used as a hint by a linear time algorithm.
\item \textbf{Inductive Hint construction}. Identify an algorithm which, given some hints for problem instances of smaller input size (difficulty), it constructs one for instances of larger size. This algorithm itself could be sought automatically using Approaches 1 and 2. We suggest that it takes itself a hint of very small constant size (if any at all). The input on which it operates is the set of hints for smaller input sizes.
\item \textbf{Adaptive Hint construction}. Use existing and future techniques to determine causal correlations between events - such as Deep Learning - to analyze the data collected in step (3)(e) of Approach 1. This includes correlations between parts of the memory state at runtime and ultimate behavior of the algorithm - correctness, running time, and so on. Use the results of such analysis to prioritize some hints over others, as well as to eliminate obviously or apparently unpromising ones. These techniques can also be used to alter and combine successful hints so that the search for an adequate one converges faster on an acceptable solution.
\item \textbf{Tapping Randomness}. Include randomness in decision making with regard to which variation to try next or how to prioritize approaches. Many surprisingly efficient SAT Solvers today employ it.
\item \textbf{Multiple Arm Bandits}. Ultimately, the quest for a proper hint, using some automated method, involves allocating some amount of a finite resource - running time - to various existing or new avenues of exploration: be it an trying out an existing hint on more cases for gathering further data, transforming one by some rule, combining one with another under some yet another rule, or introducing some other random variation to one. Sometimes, the expected benefit of a particular method or transformation over another is unclear or cannot be known in advance. Taking such decisions, including with regard to how to balance exploration and exploitation pertains to a well-known computer science problem called Multiple Arm Bandits (see \cite{cit09} for example).
\end{itemize}

Finally, at all stages of the approaches described above, a human researcher could intervene and make adjustments based on his own creative and rigorous judgment, potentially leading to further speed-ups in the search for a solution.

Note the that the approach described in this subsection includes any currently known Machine-Learning algorithm, including Deep Learning: The output of the learning is in our terminology the Hint to the algorithm, which, itself is merely a simulator of a neural network. The actual learning algorithm (e.g. Reinforced Learning) is just one potential method to be used in line (3)(f) of Approach 1. Any such currently known learning algorithm is itself contained with the finite family of algorithms proposed by Approach 2.

Approach 1 could be refined to include the theory of Schmidhuber related to Gödel machines \cite{cit10}. This can be applied either with regard to proving correctness or to simply speed up the search for an optimal algorithm.

\subsubsection{Elementary Results}\label{Sec422}

In this subsection we present some elementary results pertaining to finite complexity of computer science problems, considering the approaches described in Section 4.2.1. 

We generally limit our attention to problems which have polynomial or smaller output sizes. This includes all decision problems (where the output is a single bit). The restriction that the output size be polynomial is, most of the time, natural. Outputs of super-polynomial size would, in themselves, require a long running time to merely write out. 

\begin{myobs}\label{Obs27}\textbf{Reduction to decision problems.}
Any problem which has an output of polynomial size in the input can be reduced to a liner number of applications of a related decision problem. \qeds
\end{myobs}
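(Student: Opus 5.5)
The plan is the standard bit-by-bit reduction of a search problem to a decision problem. Fix a problem $Prob$ whose output on input $s$ has length at most $p(|s|)$ for some polynomial $p$. First, normalize the output. Every output is written in a self-delimiting form: each bit $b$ is encoded as $1b$, and a trailing $0$ marks the end, so the encoding has length at most $2p(|s|)+1$. Second, define the related decision problem $Prob_{dec}$. Its input is a pair $(s,i)$ with $i \leq 2p(|s|)+1$, and it asks for the $i$-th bit of the canonical encoded output for $s$. If $Prob$ admits several valid outputs, the canonical one is fixed as, for example, the lexicographically smallest valid output. The index $i$ takes only $O(\log |s|)$ bits, so $|(s,i)|$ is linear in $|s|$. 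This matters in the finite setting: the interval $n1..n0$ for $Prob$ maps to an interval for $Prob_{dec}$ that is only slightly larger.

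Third, give the reduction itself. The algorithm for $Prob$ queries $Prob_{dec}(s,1), Prob_{dec}(s,2), \ldots$ in order and stops at the terminator. This uses at most $2p(|s|)+1$ calls, which is linear in the output size $p(|s|)$. For the finite-case accounting, I would use Theorem~\ref{Theo25}(1). Polynomially many calls to an algorithm of $PolyRank_{n0}=j$ give $PolyRank$ at most $j$ plus the degree of $p$; the overhead of writing the output is absorbed the same way.

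The main obstacle is the wording ``linear number of applications'' together with non-unique outputs. If outputs are unique, or the canonical one can be selected inside the decision problem, the argument above goes through directly. Otherwise I would use the prefix-extension variant. There $Prob_{dec}(s,w)$ asks whether some valid output extends the prefix $w$, which here means some valid output whose self-delimiting encoding begins with $w$. Greedy extension then builds a valid output with at most two queries per encoded bit, and it terminates because every encoded output has length at most $2p(|s|)+1$. In that case, every accuracy and proof requirement of Definition~\ref{Def3} has to be inherited by $Prob_{dec}$. I would state this as an explicit hypothesis, since an incomplete or inaccurate decision oracle could break the greedy reconstruction.
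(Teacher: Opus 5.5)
Your argument is correct, and it differs from the paper mainly in which decision problem you reduce to.

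The paper asks the existential threshold question ``is there a correct output for $s$ smaller than the natural number $x$?''. It then binary-searches $x$ over the output universe, recovering the least correct output with a number of probes logarithmic in the size of that universe, i.e.\ $O(p(|s|))$.

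Your primary route instead has the decision problem report the bits of the lexicographically least output directly. This makes correctness immediate, since there is no monotone predicate and no search. Your self-delimiting encoding also handles variable-length outputs explicitly, which the paper glosses over when it identifies outputs with natural numbers.

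The price is that your $Prob_{dec}$ absorbs the work of selecting the canonical output. With an efficient verifier $V$, the paper's threshold question is a plain existential (NP-type) question. By contrast, ``bit $i$ of the least valid output'' is in general only known to lie in $\mathrm{P}^{\mathrm{NP}}$. So your reduction says less about how hard the related decision problem is.

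Your prefix-extension variant removes this difference and is essentially the paper's proof in different form. Binary search maintains that no correct output lies below the current lower end. So if the search range is kept dyadic, each threshold probe amounts to asking whether some correct output extends the bits fixed so far.

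Your count of at most $2p(|s|)+1$ queries, linear in the output length, is the accurate one. The paper's parenthetical ``linear in the input size'' holds only when $p$ is linear.

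The side remark via Theorem~\ref{Theo25} is not needed for the observation. As stated it would also need care: the constant in $2p(|s|)+1$ means you are not making ``up to $n^k$'' calls, and the decision instances $(s,i)$ are slightly larger than $s$.
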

\begin{myproof}
Consider the decision problem asking ``does there exist any correct output for this input instance, which is smaller than the natural number $x$?". By using binary search over the output space for a given input instance, one can, in a number of probes linear in the input size (logarithmic in the size of the output universe) determine some correct output using the solution to the decision problem above. \qeds
\end{myproof}

\begin{mytheo}\label{Theo28}\textbf{Every verifiable problem admits a $Poly_{n0}/Exp_{n0}$ finite case algorithm.}
For any problem $Prob$, which admits a general case verification algorithm $V$ able to decide for any $(input,output)$ pair if the output is correct for the given input, there exists a hinted algorithm $S$, such that for any finite case upper bound on input size $n0$, there exists an appropriate hint enabling $S$ to solve $Prob[n0]$ correctly for all inputs and run in time $Poly_{n0}$. \qeds
\end{mytheo}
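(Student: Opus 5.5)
The plan is to build the hint as a lookup table. By Observation 1 we may first restrict attention to the decision version of $Prob$: the question ``is there a correct output below $x$ for this input?''. Any algorithm for this question yields an algorithm for $Prob$ after a linear number of probes. That reduction adds a factor of at most $n$, which by Theorem 5(1) raises the $PolyRank$ by at most one.

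The hint for bound $n0$ is the table of correct answers for every instance of size at most $n0$. There are at most $2^{n0+1}$ such instances, and each answer has polynomial size. We therefore precompute, for each input $s$ with $|s| \leq n0$, one correct output $o_s$; it exists and can be found by exhaustively enumerating candidate outputs and checking each with the verifier $V$. The table is stored so that the address of the entry for $s$ is computed directly from the bits of $s$, namely offset $2^{|s|}-1+\mathrm{val}(s)$.

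The fixed hinted algorithm $S$ reads $s$, computes this address in $O(|s|)$ RAM operations, and copies out $o_s$, whose length is polynomial. Hence the running time is $O_{+\infty}[c](n^k)$ for a fixed constant $c$ and the fixed degree $k$ of the output-size polynomial. The constant does not depend on $n0$, so the $PolyRank$ stays bounded by $k+1$ (or $k+2$ if we go through Observation 1). This is at most $1+log(log(n))$ once $n$ exceeds a fixed threshold, so the running time is in $Poly_{n0}$, or in a lower class of the hierarchy of Section 3.3. The hint has size $O(2^{n0}\cdot poly(n0))$, so its $ExpRank$ is at most about $1$, comfortably within the bound $8$ of Definition 16. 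This gives the $Poly_{n0}/Exp_{n0}$ label.

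The role of $V$ is to make the hint constructible. $GEN(n0)$ enumerates inputs, and for each input enumerates candidate outputs in increasing order, stopping at the first one that $V$ accepts. This is why verifiability is assumed: without it, a correct hint still exists, but we may not be able to compute it.

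I expect the main obstacle to be fitting the time bound into $Poly_{n0}$ exactly as defined. The clause ``$f \notin PolyLog_{n0}$'' and the per-$n$ threshold $1+log(log(n))$ both need care, and for very small $n0$ the rank bound can fail. I would handle this in two ways. First, I would read the theorem's claim as ``$Poly_{n0}$ or better'', which the hierarchy ordering allows. Second, I would absorb small $n$ into the constant by adjusting the additive constant in Definition 13, as the discussion after that definition permits.

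A secondary issue is whether $V$ always terminates when an input has no correct output. I would treat such inputs as storing a ``none'' marker. I would also assume that correct outputs are polynomially bounded, which makes the enumeration in $GEN$ finite.
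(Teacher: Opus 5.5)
Your proposal is correct and follows the paper's proof: use $V$ to brute-force one correct output for each of the $2^{n0+1}-1$ inputs of length at most $n0$, store these outputs as the hint, and answer by a lookup followed by copying out a polynomial-size output (your direct addressing in place of the paper's binary search makes no difference). The opening appeal to Observation 1 is never used, since you tabulate $o_s$ itself, so drop it together with the ``$k+2$'' remark; tabulating the decision version instead would index the hint by pairs $(s,x)$, giving about $2^{n0+p(n0)}$ entries for output-length bound $p$, which leaves $Exp_{n0}$ once $p$ is superlinear and $n0$ is large.
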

\begin{myproof}
Given a verification algorithm, one can immediately construct an inefficient general case algorithm which produces a correct output for any given input: simply enumarate all potential outputs for an input and use the verification algorithm to pick a correct one. Given this, the correct output can be precomputed for any input of size up to $n0$. There are $2^{n0+1} - 1$ such potential inputs. The corresponding correct outputs could then be stored directly as hint. An algorithm which, given an instance in this $2^{n0+1}-1$ universe, simply looks up position where the correct output is stored in the hint, using binary search, takes $log(2^{n0+1}-1) < n0+1$ steps to identify such. It can then merely output the corresponding answer, which is $Poly_{n0}$, resulting in a total running time within the $Poly_{n0}$ finite complexity class. The hint size itself remains within $Exp_{n0}$, for reasonably large $n0$-s. \qeds
\end{myproof}
\begin{mycolo}
Any verifiable decision problem admits a $Linear_{n0}/Exp_{n0}$ finite case algorithm for any $n0$. \qeds
\end{mycolo}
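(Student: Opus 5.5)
The plan is to specialize the construction in the proof of Theorem 2 to decision problems. The single output bit is what brings the running time down from $Poly_{n0}$ to $Linear_{n0}$.

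\emph{Step 1: the hint.} Fix $n0$. Using the verifier $V$, decide every input $s$ with $|s|\leq n0$ offline: for a decision problem only the two candidate outputs $0$ and $1$ need to be checked, so $V$ is called at most twice per input. Store the answers as a bit table $T$ of length $2^{n0+1}-1$, indexed by the canonical order of strings (by length, then lexicographically). This table is $hint_{n0}$, exactly as in the proof of Theorem 2. Its size $G(n0)=2^{n0+1}-1$ is at most $OC(2^{n})$ up to a constant factor, so its $ExpRank$ is at most $1$. Since it grows exponentially, it lies in neither $PolyLog$, $Poly$ nor $SemiPoly$. Hence the hint is in $Exp_{n0}$ per Definition 16.

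\emph{Step 2: the hinted algorithm.} On input $s$, the fixed algorithm $S$ computes the index $idx(s)=2^{|s|}-1+\mathrm{val}(s)$ by a single left-to-right scan of $s$, doubling and adding at each bit. It then reads $T[idx(s)]$ and outputs that bit. Under the RAM model with the hint preloaded, as stipulated in Section 3.4, the read costs one dereference. Unlike in Theorem 2, no binary search is needed, and the output is one bit rather than a polynomial-size string. The running time is therefore $T(n)\leq c\cdot n + c'$ for fixed small constants $c,c'$ that are independent of $n0$, because $S$ does not change with $n0$.

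\emph{Step 3: membership in $Linear_{n0}$.} By Definition 11, I need $T(n)=OC_{n0}(n)$, that is, $Const_{n1..n0}(T,n)\leq (1+1/n0^2)\,Const_{n1..(n1+n0)/2}(T,n)$. I would argue that $T(n)/n$ is non-increasing in $n$ for the scan-based cost. Then the maximal ratio over the whole interval is attained at its left end $n1$, which lies in both the full interval and the half interval, so the two constants coincide and the factor is $1$. If $T$ is not exactly of this form, I would instead pad the running time so that $T(n)=c\cdot n$ exactly, which makes $Const$ identical on both intervals.

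The main obstacle is the fussiness of the paper's definitions rather than the algorithmics. First, Linear is listed in the hierarchy alongside PolyLog and Const, while the classification claims each function lies in exactly one class. So I must check that $T$ is genuinely linear and not absorbed by $PolyLog$, or else argue that landing in a lower class only strengthens the claim. Second, the $1+1/n^2$ tolerance in $OC$ needs the exact-ratio argument from Step 3 rather than asymptotics.
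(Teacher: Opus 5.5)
Your proof is correct and follows the paper's route. The verifier-built answer table from the proof of Theorem 6 (the result you call Theorem 2) serves as the exponential-size hint, and because the output is a single bit only the linear-time location of the entry matters; computing the index directly instead of by the paper's binary search is a harmless variant. Your explicit check that $T(n)/n$ is non-increasing, so both $Const$ values are attained at $n1$, supplies the $OC$ verification the paper omits. The one overstatement is that the $2^{n0+1}-1$ table avoids $SemiPoly_{n0}$ for every $n0$: for small $n0$ the rank thresholds can place it lower, which is why the paper says ``for reasonably large $n0$''. But, as you already observe for the running time, landing in a lower class only strengthens the claim.
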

\begin{myproof}
Since a decision problem has constant output size (namely 1 bit) only the linear time taken to identify the index for the correct output determines complexity. \qeds
\end{myproof}
\textit{Complexity}: For decision problems, the above approach involves merely $2*(2^{n0+1}-1)$ applications of the verification algorithm $V$. For problems of polynomial output size this is multiplied by the maximum size of the output universe, which is of the order $2^{O(n0^c)}$ for some constant $c$. In both cases, executing this approach directly for any $n0$ has general case complexity within EXPTIME, so long as the algorithm $V$ is itself within this class (e.g. it is in P). While not generally considered tractable, EXPTIME is not the worst general case complexity class out there.

\textit{Discussion}: The above approach gives an algorithm to determine some correct output, but not also a proof of its correctness. For the brute-force method, correctness is guaranteed by the construction of the hint itself: The output is correct, because given the manner in which the hint was constructed, it cannot be incorrect.

\begin{mytheo}\label{Theo29}\textbf{Solving verifiable problems optimally in the finite case is computable.}
For any problem $Prob$, which admits a general case verification algorithm $V$ able to decide for any $(input,output)$ pair if the output is correct for the given input, and for any natural number $n0$, there exists an unhinted algorithm which determines the optimal algorithm for solving $Prob[n0]$. \qeds
\end{mytheo}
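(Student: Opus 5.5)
The plan is to reduce the search for an optimal algorithm for $Prob[n0]$ to a finite, exhaustive search that is carried out entirely by an unhinted program taking $n0$ as input. Such a program only ever inspects finitely many candidates, each on finitely many inputs, so it halts.

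\textbf{Step 1: the finite input/output table.} By Observation~2 we may reduce to the case of bounded output size, and in particular to decision problems. In that case each of the $2^{n0+1}-1$ inputs of length at most $n0$ has finitely many candidate outputs. Following the proof of Theorem~3, we enumerate, for each input $s$ with $|s| \leq n0$, the candidate outputs in canonical order. We run $V$ on each pair and keep the first one $V$ accepts. This computes, unhinted, a complete table $T$ of correct answers on the finite universe. It is also exactly the hint used in Theorem~3, so an adequate solution is guaranteed to exist.

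\textbf{Step 2: a finite candidate space.} Fix a cost measure for ``optimal'': the worst-case number of steps over inputs of size at most $n0$, with ties broken by program length. Theorem~3 supplies a concrete solution $S_0$ whose cost $c_0$ is computable, since we can simply run it on all inputs. Any strictly better algorithm halts within $c_0$ steps on every input of length at most $n0$. It can therefore read at most $c_0$ cells of its code and memory. This lets us restrict attention to programs, counting the hint as part of the code in the sense of Definition~4, whose relevant prefix has length bounded by a computable function of $c_0$ and $n0$. There are only finitely many such programs.

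\textbf{Step 3: exhaustive selection.} Enumerate all candidate programs in order of increasing cost bound. Run each on every input of length at most $n0$ for at most $c_0$ steps, and compare its outputs against $T$, or check them directly with $V$. Output the first candidate that is correct on all inputs and has minimal cost. Every simulation is clock-bounded, so no halting question arises, and the whole procedure terminates. It is therefore an unhinted algorithm computing the optimal solution of $Prob[n0]$.

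\textbf{Main obstacle.} The delicate step is the finiteness argument in Step 2. In the RAM model with a hint assumed preloaded, a fast program could in principle carry an arbitrarily long but unread hint. The argument must make precise that only the portion of code and hint actually touched within $c_0$ steps matters, so programs agreeing on that portion are equivalent. If the paper's notion of optimality instead trades time against hint size, as in the $T(n)/G(n)$ notation, I would first fix a lexicographic or weighted order on the pair $(T,G)$ and bound $G$ by the size of the Theorem~3 hint. I would then repeat the same truncation argument.
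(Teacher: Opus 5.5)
Your proof is correct at the paper's level of rigor, but it bounds the search space by a different mechanism. The paper's sketch enumerates all algorithm/hint pairs whose \emph{joint size} is at most that of the lookup-table solution of Theorem~6 (what you call Theorem~3). It justifies the cutoff by asserting two things: larger hints are pointless because that solution is already linear in input plus output size, ``thus optimal''; and its fixed part has constant size. You use the same solution only as a \emph{cost anchor} $c_0$. You argue that any competitor at least as fast touches only what it can reach in $c_0$ steps, then run every candidate under a $c_0$ clock and verify its outputs. Your route replaces the paper's optimality assertion with an actual argument that nothing outside the finite universe can beat $S_0$. That assertion is asymptotic, so it does not certify exact step-count optimality in a setting where constants are never hidden, and it overlooks that some problems admit sublinear algorithms. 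Your route also makes termination explicit, instead of leaving it to the timeouts of Approach~1. The paper's route has its own advantage: a bit-length cutoff gives a finite universe independently of the machine model and yields the 2EXP estimate at once. Your procedure is also explicitly uniform in $n0$, which is the real content of the theorem. As literally written, for one fixed $n0$ the statement is already witnessed by the program that merely prints the optimum.

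Two things to tighten. First, under random access the cells read within $c_0$ steps need not form a prefix. Across the $2^{n0+1}-1$ inputs, at most about $(2^{n0+1}-1)\,c_0$ code and hint cells are touched. However, their addresses and contents are bounded by a computable function only if the model bounds word size, say as a function of $n0$. In a unit-cost RAM with unbounded registers neither bound is automatic, so state that assumption explicitly. In a sequential-access model your argument goes through essentially as written. Second, drop the appeal to Observation~2. Replacing $Prob$ by a decision version would optimize a different problem, since optimality does not transfer through that reduction. To make Step~1 terminate you only need the paper's standing assumption of polynomially bounded outputs. In Step~3, check outputs with $V$ rather than against $T$, because a correct candidate may return a different correct output than the one stored in $T$.
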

\begin{myproof}
Any implementation of Approach 1, which exhaustively enumerates all algorithms and potential hints of joint size up to at most the size of the algorithm and hint resulting from the application of Theorem 6 for $Prob[n0]$ will consider the optimal running time algorithm among them. Hint sizes outside $Exp_{n0}$ are pointless, since an algorithm linear in input plus output size (thus optimal) exists for such a large sized hint. Considering algorithms which, together with their hints, are of size over $Exp_{n0}$ is again pointless: the algorithm constructed in the proof sketch of Theorem 6 is of very small constant size, thus allowing the joint size to remain within $Exp_{n0}$. As such, trial of algorithm/hint pairs only within these limits suffices. \qeds
\end{myproof}
\textit{Complexity}: The complexity of employing this approach without further refinement is as follows. For a decision problem, for every candidate algorithm/hint combination in the input universe, verifying its correctness can take at most $2^{n0+1}-1$ applications of the verification algorithm $V$ and similarly many applications of the candidate algorithm itself. These are within EXPTIME if $V$ and the candidate are within EXPTIME themselves. For problems with larger, but still polynomially sized outputs, this is multiplied by some factor $2^{O(n^c)}$ for some constant $c$, representing the increased output size. This keeps the complexity for veryfing a single algorithm/hint pair within EXPTIME, so long as $V$ and the candidate are themselves within EXPTIME. The universe of potential algorithm/hint pairs is doubly exponential in $n0$ (exponential in the maximum hint size), making the total complexity no worse than 2EXP, which is within ELEMENTARY. Large, but computable.

\textit{Discussion}: In practice, the trial universe will be much smaller. Most likely only hints of $Poly_{n0}$ or $SemiPoly_{n0}$ size will be considered and the family of fixed algorithms for a problem domain will consist of just 1 or sometimes a very small subset of those described by Approach 2. Also, most candidate algorithm/hint pairs will not be allowed to run beyond some desired complexity (most likely $Poly_{n0}$ or $SemiPoly_{n0}$) and will not be run on all possible inputs for verification purposes, thus resulting in further running time reductions.

\begin{mytheo}\label{Theo30}\textbf{Solving efficiently verifiable problems optimally in the general case is computable if they have a determinable complexity collapse threshold.}
For any problem $Prob$, which admits a general case verification algorithm $V$ able to decide for any $(input,output)$ pair if the output is correct for the given input, and also has a known or determinable $n1$ such that $Explode_{n1}(Prob,l)=+\infty$ for some desired complexity hierarchy level $l$, there exists an unhinted algorithm which provides a general case algorithm of complexity corresponding to finite complexity level $l$ which solves $Prob$, so long as the verification algorithm $V$ is belongs to this complexity level or better itself. \qeds
\end{mytheo}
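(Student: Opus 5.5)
The plan is to combine Theorem 7 (optimal finite case solving is computable) with the characterization of general case complexity via $Explode$ from Theorem 3 and Theorem 1. Let $n1$ be the known or determinable threshold with $Explode_{n1}(Prob,l)=+\infty$. Here $Explode$ is read as applied to the running time function of an optimal (or at least level $l$) algorithm for $Prob$. The statement means that for every $n0 \geq n1$, $Prob[n0]$ admits an algorithm whose running time lies in a class at or below level $l$ on $n1..n0$. By Theorem 3, this is equivalent to $Prob$ admitting a general case algorithm of the corresponding general complexity given by Theorem 1.

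The unhinted algorithm $A$ is built as follows.

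\begin{enumerate}[(1)]
\item Compute $n1$, which is allowed by hypothesis.
\item Run the procedure of Theorem 7 for $Prob[n1]$ and for successive doublings $Prob[2^k n1]$. Each run is restricted to candidate algorithm/hint pairs whose running time, as measured exactly (the constant is never hidden in finite algorithmics), fits level $l$ on the relevant interval.
\item Exploit the fact that Approach 2's family has a fixed code part. Because $Explode_{n1}=+\infty$, the optimal algorithms for the intervals are required to stay within level $l$ forever. The key step is to search for a single fixed code $S$ together with a hint generator $GEN$, enumerated by increasing description length. For each candidate we check, via $V$ and exhaustive testing as in Theorem 7, that $S(\cdot,GEN(n))$ is correct on all inputs of size up to the current bound. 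We also check that the running time of $S$ plus the cost of evaluating $V$ (assumed at level $l$ or better) stays within level $l$.
\item Output the first pair that survives. Statement (1) of Theorem 1 and its Corollary then convert the stable finite case bound into the general case bound.
\end{enumerate}

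The main obstacle is step (3). A pair that passes all checks up to a finite $n0$ may still fail beyond it, and enumeration alone cannot certify correctness for all sizes. My first attempt is to use $V$ inside the output algorithm itself. The returned algorithm runs $S$, verifies its answer with $V$ (which is at level $l$, so the composed cost stays at level $l$ by Theorem 5), and upon failure falls back to the next candidate in a dovetailed Levin-style enumeration. Since some candidate works for all sizes, only finitely many candidates precede it. The fallback therefore adds only a constant overhead, and correctness holds everywhere by verification. This turns "eventually correct" into "always correct" at a constant-factor cost. It uses exactly the hypotheses that $V$ exists and belongs to level $l$, and that $n1$ bounds where the constant stabilizes.
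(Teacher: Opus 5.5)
Modulo the two points in the next paragraph, your final construction is correct. But once you adopt the fallback from your last paragraph, it is a genuinely different proof from the paper's.

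The paper keeps the Theorem 7 search and uses $n1$ to make the candidate set finite. It runs the exhaustive search up to the threshold, modified to output \emph{every} suitable algorithm/hint pair. It asserts that the true general-case algorithm $G$ is in this set, which has size $O(1)$ though possibly as large as $2^{2^{O(n1)}}$. It then returns the algorithm that runs all members on the given instance, checks their outputs with $V$, and returns a verified one.

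You share the essential trick: runtime verification by $V$ lets you use a portfolio without knowing which member is right, at constant-factor cost. Your portfolio, however, is the infinite Levin/Jones enumeration with geometrically decaying time shares. That buys real simplifications. You need neither the Theorem 7 search nor the determinability of $n1$, only the existence of one uniform level-$l$ algorithm $G$. Your constant depends only on $G$'s position in the enumeration. And you avoid having to argue, as the paper tacitly does, that $G$ already lies within the Theorem 7 size cutoff when the search stops at $n1$.

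The cost, from the paper's standpoint, is that the good algorithm is never identified. That is precisely the feature of Jones' method the paper contrasts itself with in Section 4.3, item (7). It also means your steps (1)--(3) are vestigial, and your closing claim that the construction uses the hypothesis on $n1$ is inaccurate.

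Two points need tightening. First, ``runs $S$ \ldots\ upon failure falls back'' must be real time-sharing from the outset. Beyond the tested range a candidate may fail by not halting or by running too long, which cannot be detected. So each candidate's slice must cover computing $GEN(|x|)$, running $S$, and running $V$ on the result, and the overhead is a multiplicative constant rather than an additive one. The paper's ``run all and verify'' tacitly needs the same parallelism.

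Second, your appeal to Theorem 3 fails under your own per-$n0$ paraphrase. If each $Prob[n0]$ may use its own hinted algorithm, Theorem 6 makes the hypothesis hold at level $Poly$ for every verifiable problem. An example is finding a satisfying assignment of a satisfiable 3CNF formula, where $V$ is linear; the conclusion would then be a general polynomial-time algorithm. So the fact that some candidate works for all sizes has to come from reading $Explode_{n1}(Prob,l)$ as a statement about the running time of a single general algorithm. That is also what the paper's proof silently assumes.
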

\begin{myproof}
One can apply the method in Theorem 7 for ever increasing $n0$-s (for example taken under repeated doubling or repeated squaring) until it can be established that the $n1$ threshold has been reached. If an upper bound is known for it in advance, $n0$ can be taken to be directly $n1$. The method in Theorem 7 is modified to output not just one suitable algorithm, but all of them. This multiplies the size of the output of the method by at most the size of the universe of algorithm/hint pairs, making it as large as $2^{2^{O(n0)}}$. The correct algorithm, which solves the general case problem within the desired complexity, is necessarily amongst this outputted set. In terms of general complexity theory, this set has size $O(1)$. As such, the general case algorithm constructed consists of running all such algorithms ($O(1)$ of them) for any input instance given, and verifying each of their outputs using the algorithm $V$, picking the correct one. So long as the complexity of $V$ is no larger than the desired class, this results in an algorithm of such general case complexity. \qeds
\end{myproof}
\textit{Complexity}: The complexity of employing this approach without further refinement is essentially within at most some $log(n1)$ factor (for repeated doubling) of the complexity of a single application of the method in the proof sketch of Theorem 7 for $n1$, modified to output any acceptable algorithm/hint pair. The modification does not alter the running time complexity of this brute force method. As such, as argued for Theorem 7, the running time is within a constant factor of $2^{2^{O(n1)}}$, which, ironically enough is $O(1)$ in terms of general case complexity.

\textit{Discussion}: Note that by Theorem 4 any problem which admits a general case algorithm of some corresponding complexity (e.g. P) necessarily has some corresponding fixed $n1$. In practice, $n1$ may or may not be knowable in advance. It can be guessed or some rule for its determination hypothesized. For example, it can be speculated that if the finite complexity class has not exploded for 5 successive repeated squaring applications, then this threshold has been met or exceeded. Or it could be speculated that its finite complexity class is monotonically non-increasing with increase in input-size beyond some small fixed threshold.

Note that the output produced by the approach in Theorem 8 can be further trimmed down, both in practice (as some candidates are eliminated as more and more input instances are processed) and via theoretical reasoning, by a researcher which is able to prove that one such is actually always correct. A formal proof of this may itself be rather lengthy (e.g. consider the proof for classification of algebraic finite simple groups, which ``has around 15,000 pages, spread through mathematics literature"). If it exists at all! Given Gödel's first incompleteness theorem (see \cite{cit11}), there are true statements expressed in first-order logic over natural numbers which cannot be proven. The desired statement of correctness might happen to be one of them. In the eventuality a proof exists, a researcher could again employ the theoretical framework and approaches described in this paper to develop an algorithm to automatically find it. This is possible since verifying formal proofs is in fact rather straight forward - a trivial verification algorithm exists. Finding such a formal proof, or showing that one does not exists - that is the hard part. 

The significance of Theorems 6-8 is rather major: It shows that finite algorithms can, at least theoretically, solve almost all problems currently considered hard - if only, as of now, in 2EXP time - which may itself be a rather long wait. Nevertheless, their existence allows the problem of finding a solution to a computer science problem to be rephrased in terms of tradeoffs between the following three dimensions:
\begin{itemize}
\item Running Time of the solution algorithm $S$.
\item Hint Size for the solution algorithm $S$.
\item Running time of computing a suitable hint for algorithm $S$ by some automated method, given existing knowledge. 
\end{itemize}

There are undoubtedly many interesting questions and results pertaining to reductions and relationships between finite case problems and either general or other finite case problems. These include the ``meta-problems" induced by some general problem, under some interesting or practical assumptions: the problem of finding a suitable algorithm for it, of finding a suitable hint for such and others. Further interesting questions include applying the approaches here recursively upon themselves, to potentially produce faster than brute-force algorithms for solution finding. We leave such questions outside the scope of the current paper and propose them as avenues for further research.

One direction which seems particularly useful to us for priority examination consists of solving problems which fall in the following categories:
\begin{itemize}
\item They belong to the $Poly_{n0}/LogRank_{n0}=2$ finite complexity class. This entails the existence of a reasonably small number of potential hints (up to $n0^{log(n0)}$) - thus making exhaustive search relatively feasible.
\item They have known and relatively efficient output verification algorithms. All NP-Complete problems fall into this category.
\item They admit natural formulations as decision problems. Determining satisfiable assignments for a Boolean formula is one such example (with the 3CNF-SAT decision problem). Computing Discreet Logarithm is not.
\item There exists some non-trivial amount of solved hard test cases within existing body of research. 
\end{itemize}

\subsection{Reasons for considering finite algorithmics valuable}\label{Sec43}

The most important argument which needs to be made for acknowledging the importance of the study of finite algorithmics is why we should expect that finding a solution to the finite case of problems is easier than finding one for the general case.

We, the author, present the following arguments as indications that this is in fact the case:
\begin{enumerate}[(1)]
\item \textbf{There exist problems which are incomputable in the general case, but computable for any finite case}. Consider the following problems:
\begin{enumerate}[(a)]
\item For classical computers: \textbf{Optimal String Compression} - determining the shortest program which outputs a given string. In the general case this is equivalent to determining the Kolmogorov complexity of the string and is incomputable. However, if we limit the problem to the practical application of considering only strings of length up to some $n0$, and limit the running time of the program to at most $Exp_{n0}$, the problem becomes computable: one can simply enumerate all programs of length no larger than $n0$, and run them until they either time out, produce a wrong string or produce the desired string. Afterwards the shortest correct one can be selected. In fact, one could even try to determine a suitable hinted algorithm using Approach 1 and some hint generation method which allows some, or most strings which appear in practice to be compressed efficiently (NB: With regard to any algorithm, there are strings which are incompressible). The original proof by contradiction stating that Kolmogorov complexity is incomputable relies on the assumption that length of the source code of any such algorithm is shorter than the length of at least one target string. This however does not apply to the finite case, where strings have bounded size: A Kolmogorov complexity computation algorithm can indeed exists for this case (and does: for example one which includes precomputed output for each string), however it is necessarily longer than $n0$. Thus, solving the finite case, $Exp_{n0}$ bound running-time Kolmogorov complexity is actually in EXPTIME.
\item For Finite Automata: Recognizing Prime Numbers (PRIMES) - using a Finite Automata to determine if a string denoting the representation of an integer in some base (e.g. unary, binary, etc.) corresponds to a prime number. Finite Automata are unable to recognize this language. So for this computation model, PRIMES is incomputable. The proof is a relatively straight forward contradiction, using the Pumping Lemma \cite{cit12}. The problem of recognizing all prime numbers up to some upper threshold $n0$ however is computable using Finite Automata: it is in fact a finite language and all finite languages are regular. Note however, that different automata are required for different $n0$-s.
\end{enumerate}
This illustrates that the finite case can be simpler than the general case. In fact, it is so for some very practical problems: Kolmogorov complexity features prominently within Information Theory and Cryptology.
\item \textbf{There exist problems with large thresholds of complexity explosion}. Consider the following problem, inspired by the Theorem of Classification of Finite Simple Groups: ``Take some sort order for finite simple groups, such that groups of smaller order appear before groups of larger order. When tied, consider some other arbitrary criterion, like number of generators or anything else desired. Given an index $k$ of a group in this sort order, and a series of pairs of numbers representing elements within this group, output the result of the group operation acting on the elements, under some fixed (but arbitrary) numbering for them. The length of this series is logarithmic in the group order." The problem asks essentially to compute group operations consistently within a specified finite simple group. We can name it GROUPOP. Here the difficulty parameter is not the input size, but the order $n$ of the group which also bounds the input size. As per the Theorem of Classification of Finite Simple Groups, there exist actually only three infinite classes: cyclic groups of prime order, alternating groups of degree at least five and groups of Lie type. All of these have simple representations. However, there exist another 27 finite simple groups which do not belong to any of the infinite families. Out of these 27, the Monster Group, of order $M \approx 8*10^{53}$, stands out as it does not have a simple representation. As such, performing group operations in any of the other finite simple groups is computationally much faster than in the Monster Group. For cyclic groups for example doing group operations is as simple as multiplication modulo the prime which is the order of the group. This is actually logarithmic in the value of the group order. For the Monster Group however, Wilson has described a method involving two 196882x196882 matrices \cite{cit13}. Doing operations with these matrices is computationally very expensive, bringing GROUPOP outside Linear. Some other constructions have been proposed, however it still remains that the Monster Group is terribly difficult to work with. As such, one can say that $Explode(GROUPOP,Linear) = M \approx 8*10^{53}$. In fact, if operations within groups of the other two infinite families besides cyclic can be done in polylogarithmic time, we have that $Explode(GROUPOP,PolyLog) = M \approx 8*10^{53}$.

The value $8*10^{53}$ is rather large – large enough to be considered non-trivial. The problem GROUPOP is rather simple up to this group order, and then it explodes drastically. Could it not be that something similar happens to other interesting problems, like Integer Factorization or 3CNF-SAT? Furthermore interestingly, given the fact there is a single Monster Group, the complexity of GROUPOP ultimately collapses back: the super-linear complexity for $M \approx 8*10^{53}$ is ultimately smaller than a single $log$ factor of some larger group order. As such, we can state that $Collapse_{8*10^{53}}(GROUPOP, PolyLog) < +\infty$. Essentially, the finite complexity of GROUPOP is bitonic - small at first for quite some values, then it grows drastically large (rather quickly) and then collapses back to being small. From the point of view of a general case complexity theory, the existence of the Monster Group is fully irrelevant. The extreme difficulty of doing operations there, given the fact it is a singular finite case, is in fact $O(1)$. Thus, finite algorithmics offers a much better way to describe the structure of such problems.
\item \textbf{There exist problems where precomputation specific to a certain input size is very useful}. Consider the problem of determining the Minimum Spanning Tree for a given graph with $n$ vertexes and $m$ edges. This problem is relatively easy and numerous general case algorithms of near-but-not-exactly optimal complexity exist: from Kruskal's $O(m*log(n))$ to Chazelle's near-linear $O(m*\alpha(m,n))$ \cite{cit14}. However, there exists one algorithm by Petite and Ramachandran \cite{cit15} of optimal complexity - which, mysteriously enough is still unknown. Whatever it is, their solution is nevertheless bound by it. Their approach involves precomputation of all optimal decision trees on $log(log(log(n)))$ vertices. In this situation precomputation can be completed in $O(n)$, which is no larger than the complexity of the outstanding part of the algorithm. As such, the precomputation step can be done on-the-fly for each instance of the problem, without any need to store it as a hint to some hinted algorithm separately, for purposes of improving running time performance. 

It could be that some very difficult problems (maybe even 3CNF-SAT) have solutions which involve precomputations for an input size (difficulty) of larger complexity class than the rest of the algorithm. If the result of these precomputations is short enough however, we could simply store it as a hint to some hinted algorithm, belonging to a favorable complexity class such as $Poly_{n0}/Poly_{n0}$. Note that computing the hint for some input size could belong to a much larger complexity class – such as $SemiPoly_{n0}$ or $Exp_{n0}$. However, this only needs be done once for all inputs of that size. Once computed, if it is short enough it can be stored as hint allowing this potentially extremely time-consuming step to be skipped. We, the author, strongly suspect that if practical solutions for finite or general case 3CNF-SAT problems exist, they will involve reasonably sized hints which require nevertheless significant amounts of running time to compute.
\item \textbf{Other expressive computational models show significant drop in complexity from general case to finite case}. Consider potentially the closest relative of the Turing Machine - the Finite Automata. Consider a regular language with an infinite number of words. It can be succinctly described by a Deterministic Finite Automaton with some number of states. This number could then be reduced by computing the minimal automaton for the given language. So long as the language has infinitely many words, this is the best which can be achieved. However, when the attention is directed to a finite subset of this language - namely that of words which do not exceed some fixed finite length, it has been shown that the number of states could be reduced even further, using something called a Deterministic Finite Cover Automaton. This is basically an automaton which correctly recognizes the language up to words of at most the specified length, but it is allowed to error on anything longer. This is analogous to considering the finite case of some general case problem, where the sought-after solution is a specification for a Finite Automaton, not a registry machine. Deterministic finite cover automatons are expected to have a significantly smaller number of states than their counterparts for the unrestricted language. In fact, it has been shown that they have a smaller number of states than even their counterparts which recognize just the finite language precisely (are not allowed to error on longer words) \cite{cit08} \cite{cit16}.

It could be that classical computers exhibit a similar phenomenon for at least some languages - namely that complexity of recognizing one such up to some finite length is much smaller than that of recognizing it on the general case. While classical computers are a much stronger computational model than finite automata, the two are still closely related. For example, every bounded-time registry machine algorithm can be represented as an automaton which is initially fed the input and them some number occurrences of a special symbol, each corresponding to one clock tick of processing by the classical registry machine. The states of such an automaton are in fact the all the memory configurations the registry machine could encounter during its execution. While this representation is inefficient, it serves to illustrate the close relationship between the two computational models, for the finite case.

Even for machines of larger or incomparable computational power (e.g. Quantum Computers, or the theoretical Blum-Shub-Smale machines \cite{cit17}), the fact there exists sufficiently expressive computational models (the Finite Automatons) which experience complexity collapse for the finite case, serves as an indication the same could occur for these models also.
\item \textbf{There exist interesting problems which are outside 2EXP on the general case}. Reachability in Petri Nets \cite{cit18} is practically interesting and has recently been shown to be outside ELEMENTARY, thus outside 2EXP \cite{cit19}. Or consider any EXPSPACE-Hard problem for instance. Deciding if two regular expressions which allow squaring (requiring exactly two adjacent copies of the operand) represent different languages is in EXPSPACE \cite{cit20}, as is the validity problem for extended linear temporal logic with times. Besides these, many problems within Game Theory are PSPACE-Complete (e.g. solving generalized Tic-Tac-Toe), while others still are actually incomputable. 

For most of these categories of problems, there is no hope of solving them in practice by discovering an efficient algorithm for the general case. The only hope to ever solve these is within finite algorithmics - solving not the problem in general, but some restriction of it to a finite case. Here, one can apply Theorem 6 to show that a $Poly_{n0}/Exp_{n0}$ algorithm exists. Finding one however, may be outside 2EXP since the verification algorithm itself could be outside 2EXP. Nevertheless, the existence of a $Poly_{n0}$ algorithm (if but of exponential size) shows that the finite case is indeed easier than the general case for interesting practical problems. A prominent result within finite algorithmics will be one which gives a solution to one of these practically important, but generally intractable problems for some non-trivial practical upper bound.
\item \textbf{Finite case problems are a particularization of the corresponding general case problems}. Essentially, we as researchers have reduced the practical finite-case problems we are interested to solve to some potentially harder ones - namely their general case. While sometimes the general case is easy enough, this is not always true. As the easiness of 2-CNF-SAT relative to arbitrary Boolean formula satisfiability illustrates, sometimes the particularization is much easier than the generalization. Further research should focus on relationships between finite case and general case for specific problems, to determine where this is the case and where not. The theoretical framework introduced in Section 3 serves as a tool.
\item \textbf{There exists an automated method for finding an optimal solution to verifiable problems in the finite case}. The proof sketch of Theorem 7 shows how an optimal algorithm for such problems can be constructed. There exists an analogous result from Jones \cite{cit21} for general case verifiable decision problems. He essentially constructs an algorithm which runs, in a dove-tailing fashion, all conceivable algorithms until one stops and produces the correct output. While asymptotically this is optimal for the general case, the hidden constant is astronomical - it is exponential in the index of the suitable algorithm in the enumeration. This makes it generally unusable in practice. Note that Jones' method does not actually identify the suitable algorithm. For each problem instance, there could be some different algorithm which finishes first and outputs the correct answer (which is then verified by the verification algorithm). In the finite case, on the other hand, after spending some initial (potentially very large) amount of time, the optimal algorithm is determined (its source code becomes available). Thus, it can thereafter be directly applied to any instance (up to the finite upper bound) where it performs efficiently enough. Using Jones' method for the finite case would entail dealing with the astronomical constant on every run of the algorithm - on every instance. Furthermore if a problem did not admit a general case efficient algorithm, his method no longer yields an algorithm of optimal complexity, since the index of the most efficient algorithm is no longer a constant.
\item \textbf{There have been prior successful applications of the approach of automatically generating algorithms}. The field of AI and Machine Learning, particularly Neural Networks is a perfect example where trying out and adjusting an algorithm, within a certain family results in something very useful. For most AI and Machine Learning applications (such as image recognition), the problem researchers were trying to surmount was the apparent lack of a proper formal description relationship between input and output. For example, describing formally what ``an image of a cat" was (or to go further, what ``an image of a happy person" was) proved very difficult. Nevertheless, this was circumvented by employing an automated method of trial-and-error to essentially determine an algorithm which is good enough.

The same could be applied to the situation where the difficulty lies not in identifying a formalism to describe the input/output relationship, but in finding an efficient algorithm to compute it, if but only in practice. As with AI and Machine Learning, we can now regard this process as the result of a combination of automated trials and researcher insight, not just of the latter.
\end{enumerate}

The arguments above which serve to indicate that finite-case problems are indeed easier to solve (at least to us humans, potentially aided by computers) than their general case counterparts. However, there are two more arguments of a more abstract nature to indicate the existence of value in the approaches presented:
\begin{enumerate}[(1)]
\item \textbf{There exist problems which admit rather simple and short efficient algorithms, but which require complex theory to prove their adequacy}. The clearest example can be considered the string matching algorithm due to Knuth-Morris-Pratt (KMP) \cite{cit22}. It has less than 10 lines of code, a single method with no recursion, maximum nesting of 3 and all its expressions are over no more than 5 variables. Nevertheless, the sophistication of the theory behind it, especially with regard to proving its linear running time complexity, is the most likely cause why it has not been discovered earlier.
\item \textbf{Physical phenomena might exist which can be harnessed to allow rapid speedups for computations, if but at some great cost}. Given current mainstream understanding of physics concerning time dilation, if we were able to send a computer with sufficient battery power to a place far away from any gravity wells (like planets, stars or black holes) and have it stand as still as possible relative to Earth, some important speedups can be attained. Alternatively, we could leave such a computer on Earth and take refuge ourselves near a black hole until it finishes computing the desired output. Other phenomena might exist which to allow for much greater speedups (quantum non-locality seem like a good place to start a search). These however, might entail travelling to distant regions of the Cosmos, or expending large amounts of resources, like battery power. However, this can be regarded as a one-time-cost. With the methods provided for by finite algorithmics, such a sped up computer could then rely back the optimal algorithm (e.g. via radio waves) for some problem. Thereafter, we could use it solve all practical instances, without the need to incur the one-time-cost ever again. 
\end{enumerate}

\subsection{Application of techniques to three well known problems}\label{Sec44}

In this subsection we present some directions for practical application of the theory and techniques presented within this paper to three specific hard problems. They are intended to be viewed as just an example of how the quest for adequate solutions can be altered with the introduction of finite algorithmics. It is outside the scope of this paper to propose (much less test experimentally) a fully specified approach or method which can be employed to solve them. Nevertheless, we, the author, are confident that ideas formulated within the context of finite algorithmics - either based on the ones presented below or others - will eventually lead to an adequate solution to such, if one exists.

The main intended contribution of this section is to show, by way of example, how the change in reasoning due to finite algorithmics can lead to fundamentally different avenues of research for hard problems, from the ones currently pursued by computer scientists.

\subsubsection{3CNF-SAT}\label{Sec441}

The following ideas can be applied to solving 3CNF-SAT, in the context of finite algorithmics:
\begin{enumerate}[(1)]
\item \textbf{Consider only families of hard cases}. For example, for an $n$-variable formula, do not include in analysis clause configurations which are conjunctions of two or more formulas over less than $n$ variables, as such could be solved recursively separately. Also ignore families of known easy cases. For current heuristics published in literature this includes formulas with less than 2 or more than 5 clauses per variable.
\item \textbf{Discover specific problem structure incrementally}. Examine what makes some 20-variable 3CNF-SAT formulas harder to solve than others, for some algorithm or family of algorithms. It could be the existence of some tuple of clauses or some computable trait of a larger subset of clauses. Then look at 21-variable formulas and find which additional traits (besides those applicable from the 20-variable case) predict hardness. Then at 22-variable and so on. How many additional ``hard" formulas specific to an $n+1$-variable case are there (excluding those for formulas in up to $n$ variables)? Do they belong to some finite number of families? Is the number of such growing rapidly or slowly? How can they be described succinctly so as to potentially allow their storage as hint to some algorithm? Do same for 200- or 2000- variables randomly built 3CNF-SAT formulas. Such analysis could be aided by tools from AI and Machine Learning, which may discover unexpected or counter-intuitive correlations.
\item \textbf{Discover what makes candidate algorithms succed in solving hard instances when they eventually manage it}. For hard instances examine what actual choices (e.g. ``lucky random assignments") allowed their eventual resolving. How do these choices correlate with the input instance (or part thereof) and between themselves? Is this knowledge (or at least part thereof) common to several hard instances? Can all such knowledge for some $n$-variable instance difficulty be represented succinctly and efficiently enough so as to allow a $Poly_{n0}/Poly_{n0}$ algorithm to use it in order to solve all such much faster on subsequent runs? Is at least part of it common to many instances? Like with point (2) above, AI and Machine Learning tools might prove very valuable.
\item \textbf{Discover predictors for candidate algorithms non-performance}. Discover similar correlations as those in point (3) above, but for situations where a candidate algorithm performs very poorly. How can ``really poor" choices be described formally and succinctly, so they can be avoided on subsequent runs?
\item \textbf{Consider a family of algorithm/hint pairs (or a fixed algorithm with a family of hints). Discover correlations between parts of the content of the algorithm/hint themselves and performance/adequacy in trial runs}. What are good predictors for great/poor performance? Is it having a particular while loop in a certain place? Or doing random-restarts in some describable fashion? Researchers have been more or less attempting this step manually so far - leading to the discovery that random restarts are key to performance of advanced SAT Solvers \cite{cit03}. However, formal methods from AI and Machine Learning and not only could be employed to deduce many more such correlations much faster.
\item \textbf{Consider correlations between memory states of a candidate algorithm within a family and its performance/adequacy in trial runs}. It could be that for some family of algorithms, a certain memory state (or part thereof), if encountered at runtime, is strongly correlated with very poor performance (for example a particular choice of random assignments, or set of impossible assignments deduced). It can be regarded as similar to steps (3)-(5) above. Unlike steps (3) and (4), analysis aims to learn something not about the structure of instances themselves, but about structure of the family across some test battery, seeking to predict desirability of having the memory state characterized in a particular fashion. Unlike step (5), the analysis focuses not on the source code / hint contents used, but on the actual runtime memory state (which might be common, at least partially, to several algorithm/hint pairs). Like with points (2)-(5) above, tools from AI and Machine learning (and not only) can prove useful.
\item \textbf{Perform the same analysis as in point (6) above for a short sequence of memory states}. Essentially, this calls for analysis to be expanded from examining single snapshots of memory to examining short ``movies" of such (not necessarily sequentially chosen).
\item \textbf{Use some form of automatic recombination and selection method to generate new candidate algorithm/hint pairs and maintain the set under consideration within desired size limits}. This entails essentially using the information gathered in points (2)-(7) above to rank, modify and combine algorithms / hints such that one representing an adequate solution is found much quicker than by exhaustive enumeration. An example of a modification is to make an algorithm do a full or partial restart every time its runtime memory state can be characterized as ``unfavorable" as per data obtained under methods (6)-(7) above. An example of a combination is to run two or more algorithms in some dove-tailing fashion for a number of steps and then decide how to continue based on their joint memory state. Other methodologies like those specific to genetic algorithms or those presently employed in AI and Machine learning can be readily used. Ranking or more specifically selection is required to keep the candidate set size within the space limits imposed by whatever hardware is used in attempting to find the solution.
\item \textbf{Use ideas in points (1)-(8) above to incrementally generate algorithm/hint pairs for increasing difficulty}. This way, the information collected with regard to solutions of instances of lesser difficulty (smaller number of variables) can be exploited to speed up and obtain similar information about the solution to more difficult instances.
\end{enumerate}

The ideas described above are meant to speed up some automated or semi-automated search for a suitable algorithm. However, one such may not exist. Even if that is so, having a good choice of a heuristic algorithm, accompanied by a good choice of a hint can result in a huge drop in running time. It could be that such drop is much higher than the time actually consumed to generate the pair. After all, as per Corollary 2, 3CNF-SAT admits a $Linear_{n0}/Exp_{n0}$ algorithm. So the quest is actually for a more acceptable tradeoff between hint size (more specifically hint generation running time) and algorithm running time.

A final trick could be employed in practice. The universe of potentially hard formulas over $n$ variables is rather large. It is of the order of ${{4*n*(n-1)*(n-2)/6} \choose {4n}}*2^{4n}$ for formulas with up to 4 clauses per variable, which is much larger than $2^n$ and even than $2^{4n}$. However, in practice we might be interested in solving just a very small subset of these – namely the ones which occurred as a reduction from some other practical problem. Sometimes, it could be that we are actually interested in solving a single very lengthy formula - one for example giving a winning strategy for a complex military game position, or an optimal design for a microchip. In such a case we can particularize further. When using ideas (1)-(9) above (and any others for that matter), we can consider only expressions which are formed by a subset of the clauses appearing in the original large instance. Thus, the universe of instances for all variable sizes is cut to $\approx 2^{4n}$ which is a huge reduction. Furthermore, the ideas and methods described can now make use of structure specific to the original input instance to arrive at a solution much faster. The only draw-back is that the algorithm / hint pair can be expected to perform adequately only on the original input instance set (which may have a single element). This nevertheless, can be an acceptable and desirable tradeoff.

\subsubsection{Kolmogorov Complexity}\label{Sec442}

Problems which in the general case are provably incomputable due to reduction from Kolmogorov complexity typically relate to string compression or have to do with entropy extraction (generating more randomness from less such). The two are not unrelated. 

In this subsection we focus our attention on string compression. Formulating the problem for practical use in this case involves more than just restricting the input size. Formally we consider a string compression problem to have the following statement: ``Given a set of strings of length no more than $n0$, determine some pair of (potentially hinted) algorithms $Compress_{n0}$ - which takes an input string and produces a digest - and $Decompress_{n0}$ - which takes a digest and produces the original string -, such that the digest of maximum (or average) length is as short as possible (or simply ``short enough") and both algorithms belong to $Poly_{n0}/Poly_{n0}$ finite complexity class."

The above is an adaptation of the original formulation of Kolmogorov complexity, which required compressing a single string by using an algorithm of unbounded complexity (but which surely terminates) to produce it. The above formulation allows for the input set of strings to contain a single element as well. However, in practice it is more likely that a single solution is sought which can be used to compress several strings (potentially all the strings of length $n0$).

Under the above formulation, all ideas from Section 4.4.1 can be adapted here as well. The only difference will be in verification - as more and more algorithms are considered, performance entails not only examining running times but also lengths of generated digests.

Some ideas specific to string compression, formulated in the context of finite algorithmics are the following:
\begin{enumerate}[(1)]
\item \textbf{Determine short incompressible strings which appear as substrings within the input set}. It is a well-known information theory result that for any length there exist incompressible strings. This can be shown via a simple counting argument for a binary alphabet. Furthermore, the density of incompressible strings is rather large. Using this information, one can attempt to ``break down" the input strings into incompressible ``atoms". These can then serve as part of a hint to an algorithm which only describes how to assemble them together to obtain the desired string. Incompressibility within this context does not need to be strict. A reduction of less than 3-4 characters for example could make a large string be considered just as well incompressible. Note that determining atoms is incomputable in the general case for sufficiently large strings. Nevertheless it is very computable within the finite-case formulation we are considering.
\item \textbf{Given a list of short strings (atoms) determine a method which uses such to build a larger target string}. One straightforward such method is to break the target string into concatenations of atoms and then to store only the index of each such for each part. More sophisticated methods could involve exploiting correlations between contents at different positions (e.g. repeat adjacent occurrences of an atom).
\item \textbf{Apply ideas (1)-(2) above recursively, on the digests generated by the method in idea (2)}. This allows further compression based on the non-randomness of the pattern in which atoms themselves occur within a target string. Note that the input for the recursive step is typically strictly shorter than the original input, which was already compressed by a prior application. The final output could then just include a number indicating how many times recursion was applied.
\item \textbf{Consider space-time tradeoffs in deciding which short strings to keep as hint to the solution algorithm and how to represent them}. Atoms themselves may not need to all be kept in their lengthy, full form. While a single atom is considered incompressible, a list of several may have a more succinct representation than simple enumeration of them all. For a binary alphabet, a suffix tree or even a simple trie may offer an efficient improvement. However, there may be other shorter representations which in turn require longer processing times to allow the extraction of some ``$k$-th atom".
\item \textbf{Exploit randomness}. Consider producing methods and algorithms which make random choices. In the context of decompression, such can produce the desired original string only with some probability (e.g. $1/2$ or $2/3$) – and in the other cases either produce something else or exceed desired running time. In the context of compression, such could produce valid digests only with a certain probability. 
\item \textbf{Consider error-correction codes}. In the context of idea (5) above, consider padding some lengthy atoms using error correcting codes. While counterintuitive, this could potentially allow for shorter algorithms / hints to be generated - since one such need not output a precise string, but any of its correctable forms. Furthermore, simple detection of errors could indicate the need for rerunning said algorithm automatically for a different random seed, thus improving the probability of correct output. Finally, given some input set of strings, all atoms within it might be sufficiently separated in terms of Hamming distance. Thus, there may be no need for additional padding. An algorithm which only very occasionally outputs the correct atom and the rest of the time something which is not an atom can be combined with an algorithm (like a Deterministic Cover Automaton) which simply recognizes the language of atoms for the given input string set.
\end{enumerate}

All results pertaining to Kolmogorov extractors (entropy extractors), polynomial-time randomness (producing outputs which are indiscernible from random by any polynomial time algorithm) and related topics are relevant and can be further refined to apply in this context. A prominent researcher in this field is Prof. Marius Zimand (see \cite{cit23} or \cite{cit24}).

As illustrated in Idea (4) briefly, a related problem to string compression is finite language recognition: ``Given a set of strings, produce an algorithm which can determine if an input string is part of this set or not." This related problem is extremely relevant to finite algorithmics. Firstly, any decision problem can be formulated in terms of determining if an input instance is within the set of instances for which the answer is ``Yes". A solution to efficiently deciding membership within this set solves the original problem for the finite case. In fact, compression of the set of outputs of some problem (e.g. 3CNF-SAT) on some small finite input universe, such that set membership can be decided efficiently, can and should be employed in the course of running automated methods for finding the solution for larger input sizes (difficulties). The starting point for solving a decision problem can be finding the Deterministic Finite Cover Automata for the set of strings which represent small input instances with a ``Yes" answer. Using such, group membership can be decided very quickly (linearly in instance size). However the size of the hint (the actual description of the DFCA) can grow rather large. Nevertheless, we the author consider the relation between DFCAs and acceptable algorithms (in terms of running time / hint size / hint generation time) for set membership decision problems as a prime candidate for future research. We see such research as both general and specific to a particular problem domain (e.g. to the set of satisfiable 3CNF-SAT formulas over at most $n0$ variables).

\subsubsection{Integer Factorization}\label{Sec443}

Factoring large integers can be solved efficiently by quantum computers, using Shor's algorithm \cite{cit25}. Nevertheless, a similarly efficient algorithm for a classical computer is yet to be discovered. Integer factorization occurs mainly within the realm of cryptology and generally pertains to identifying a prime factor of a large semiprime number. Besides adaptation of the ideas from Section 4.4.1 which can prove useful, an idea specific to this problem is the following:
\begin{enumerate}[(1)]
\item \textbf{Identify and store ``hard" primes}. Given a target range for the integer to be factored (e.g. 512-bit or 1024-bit sized), and some state-of-the-art existent algorithm (e.g. Pollard's Rho algorithm or GNFS, or a combination of such), determine what constitutes ``hard primes" for it. These are prime numbers which, when they appear in the composition of an integer to factor, cause the algorithm's running time to increase drastically. If the number of such ``hard primes" is relatively small in relation to maximum value of the integer to factor, they could all be stored. Even if there are relatively many such, ideas from Section 4.4.2 could be employed to get a more succinct representation of this set, allowing it to be enumerated efficiently.
\end{enumerate}

The above idea, steams from the following anecdotal empirical experience of the author. Many years ago, he participated in an open factorization challenge (which was part of a larger computer science contest). Contestants were asked to factor each of 10 large numbers within a week. The author encountered the following situation: The first 7 were relatively easy to factor and he managed to factor the 8th and the 9th as well using some more advanced techniques. However, the 10th one seemed unbreakable. At that point we considered the following question: ``How could the problem settlers have come up with such a hard case in such short a time? It was known to him that they themselves had only about one week to prepare the challenge.” Given this, he tried the following: He searched on the internet for the primes which showed up as factors for the other two hard cases - namely the 8th and the 9th. He then identified a small number of short lists of primes which featured them. He then used a computer program to try out each of the primes on those lists against the hard 10th challenge case. To his delight, this worked. The ``hard prime" for the 10th case was in fact taken from a list on the internet. The experience above serves to indicate that generating ``hard primes" is no easy task. Like with 3CNF-SAT, most large instances of Integer Factorization are easy to solve. Those which remain may be hard due to the presence of some of these hard primes in the solution. Identifying all such and, if there are not that many, including them as hint to some hinted algorithm, might make integer factorization easy for all practical sizes even for a classical computer.

\section{Discussion}\label{Sec5}

We have discussed the significance and implications of most results and theory throughout the paper, close to the place of their introduction. In this section we present a few ideas of a more general significance.

The results in Section 4 serve to illustrate that analyzing a problem for the finite case, rather than on the sometimes more difficult general case holds value. Problems which are very hard (or even impossible) to solve in the general case may have acceptable finite case algorithms. Furthermore, the search for suitable algorithms in the finite case can be automated or sped up using computers.

The introduction of finite algorithmics allows us, as humans, to reason about hard problems differently. Ultimately, within the framework introduced in this paper one could ultimately prove that:
\begin{enumerate}[(1)]
\item \textbf{$P \neq NP$}. One way this could be done is by proving that for any large enough input size upper bound $n0$, the length of the shortest hint for a $Poly_{n0}$ time algorithm which solves it is strictly larger than for $n0/2$. This would not necessarily entail that NP-Complete problems cannot be solved in practice, but would decide the general case question.
\item \textbf{$P = NP$}.  One way this could be done is by providing a polynomial time algorithm which constructs a hint for any input size upper bound $n0$ for an algorithm of bounded $PolyRank$ time complexity. This could be further restricted to practical significance, by providing a $Poly_{n0}$ algorithm for hint construction for a $Poly_{n0}/Poly_{n0}$ algorithm.
\item \textbf{$P = NP$ or $P \neq NP$ but we really do not care about the distinction for practical purposes}. This could occur either because an efficient algorithm and hint have been identified for all practical bounds (favorable case) or because it has been proven that the shortest hint size for most practical cases is too large (unfavorable case). In the former situation, if $P \neq NP$ this essentially happens for input sizes outside of humanity's practical zone of interest, while in the latter, if $P=NP$ this again happens for too large input sizes, such that the drop in complexity in the general case is in fact of no practical use.
\end{enumerate}

The same discussion as above applies to the study of relationships between other complexity classes (such as between P and PSPACE).

The results and techniques presented in this paper can be applied not only to hard problems (PTR and above), but also to those which are relatively easy but for which we would like to identify even more efficient algorithms (TR). One such candidate is multidimensional range querying. An algorithm which breaks the ``curse of dimensionality" - if such exists - could be sought and found using the same approaches.

Ultimately, we expect the change in mindset and in focus of research resulting from rephrasing a problem in terms finite algorithmics theory to lead, in the near future, to practical solutions for some of the hardest computer science problems that have been haunting humanity for the past decades.

\section{Conclusion and Further Research}\label{Sec6}

Throughout this paper we have identified several avenues which we consider prime targets of future research. We briefly recap them here:
\begin{enumerate}[(1)]
\item \textbf{Examining relationships between different finite complexity classes}. This can pertain to relationships between different finite complexity classes for the same problem domain (e.g. for different $n0$ upper bounds) or between different problem domains (e.g. resulting from reduction of one problem to another). Also, they could be unspecific pointing out interesting results for finite complexity in terms of natural functions in general without the need for them to represent something in particular.
\item \textbf{Examining relationships between finite complexity classes and general case complexity}. Similarly this can occur within a problem domain, connect several problem domains or be unspecific, pertaining only to natural functions in general.
\end{enumerate}

With regard to the these, we ask simply ``What are interesting results which fall into these categories?". We presented a few elementary ones ourselves in this paper, in Section 4.1.

In addition to the above, we propose the following directions for future research, which seem to us promising: 
\begin{enumerate}[(1)]
\item \textbf{Investigating the relationships between Finite Automata and efficient Hinted Algorithms for the finite case}. Limiting input size, running time and usable space to some finite bound allows a problem to be solved within a computational model less powerful than a Turing machine. Namely, any algorithm on a classical computer which has bounded memory size and is limited to a maximum number of steps to perform (finite case complexity) can be accurately represented by a finite automaton over a ternary language: The states of the automaton represent the memory configurations which can be encountered during execution, transitions correspond to the small changes an algorithm can perform in one step leading from one memory configuration to another and the ternary language symbols represents the clock ticks which the algorithm consumes. The first part of an input word is the binary representation of the input instance for the original problem, and all the rest are 3-s. If the automaton accepts on such a constructed input, so does the corresponding classical computer algorithm. Ironically enough, not all automatons defined in this fashion correspond directly to a classical computer algorithm - a transition within an automaton can be from a corresponding memory state to any other, while for a classical computer a transition (one operation) only changes one word of memory at a time and thus it can point only to very similar states. While direct automaton construction and minimization based on the observation above may not lead to a time-wise feasible approach to solving a problem, conceptually it can offer deep insights. The relationship between the two computational models for the finite case warrants further research.
\item \textbf{For a specific problem domain investigate the growth of minimum hint size as the finite upper bound increases}. The fact a problem is limited to the finite case does mean the upper input size (difficulty) bound should remain fixed during analysis. While for practical applications existent at some moment such bound is a definite, effectively reaching it may entail examining correlations between solutions for smaller ones. One very interesting question is the following: ``Given a problem $Prob$ and some target finite complexity class for an efficient algorithm, how does the size of the shortest hint vary with the upper input size (difficulty) limit $n0$?" For general case, the answer is very simple: ``It is 0 for all cases". Finite algorithmics on the other hand allows further nuance.
\end{enumerate}

Finally we propose a specific, explicit question framed within the theory of finite algorithmics which, when answered, will give the strongest indication ever - if not a proof - for deciding the classical $P=NP$ problem.

Consider some fixed, sufficiently expressive hinted algorithm. Such an algorithm can simply be one which receives, as part of the hint, the index of a more elaborate algorithm from the finite family described in Approach 2 and then runs such on the remaining hint and input instance. The family in Approach 2 can be considered to include as ``predefined types" all popular data-structures and solvers for general case problems which are commonly known from literature as of November 2019.

Given the above fixed algorithm, answer the following question: \textbf{``What is the minimum length of some required hint, which allows the above algorithm to decide satisfiability for any 3CNF-SAT formula over at most $2^{20}$ variables within running time $Poly_{2^{20}}$?". Then answer the same question for $2^{30}$ and $2^{40}$}.

Firstly, answering these questions constructively will give the most efficient method possible for solving 3CNF-SAT in practice.

Secondly, by examining how the shortest hint size required grows for the $2^{20}$, $2^{30}$ and then for the $2^{40}$ upper bounds on number of variables, one can get the strongest indication - if not even a sufficient proof - with regard to whether $P=NP$. If the hint sizes increase (at least significantly), this is a very strong indication that $P \neq NP$. In fact, the only way this could happen and still have $P=NP$ is if the additional sophistication in the structure of the 3CNF-SAT with the increase in the number of variables, drops to 0 beyond a certain finite bound above ${2^{40}}$ (similarly to that of GROUPOP beyond the order of the Monster Group). We, the author, believe it to be extremely unlikely for 3CNF-SAT to behave so. Conversely, if hint sizes do not (at least not significantly) increase this would be a crushingly strong indication that $P=NP$.

Finally, if we were asked to take a guess, we would expect the answer to the above question to indicate a rather slow, but positive growth rate. Most likely on the order of $Poly_{n0}$. This would indicate that NP is outside P, however it would place it well within $Poly_{n0}$ or $SemiPoly_{n0}$ in practice. Furthermore, depending on how difficult computing such a hint proves to be in the general case, NP might be placed outside P but below EXPTIME.

We conclude this paper here.

\section{Vitae}\label{Sec7}

Mircea Digulescu is a computer scientist and software engineer. He was awarded bronze medal at CEOI 2004 as well as 4th and 10th positions at ACM SEERC 2005 and 2006 respectively. He is still active in competitive programing on Codeforces where he had reached the first division. He has obtained Bachelors and Masters Degrees in Computer Science at from University of Bucharest - Faculty of Mathematics and Computer Science, where he had also been studying as a PhD Candidate in applied computer science. His main interests are within Complexity and Computability Theory, Game Theory, Algorithms and Data Structures as well as Cryptology.

\section{Acknowledgments}\label{Sec8}

No organizations funded the research presented in this paper. The author's last affiliation is PhD candidate at the University of Bucharest, Department of Computer Science of Faculty of Mathematics and Computer Science. The author is currently an independent researcher. Statement of interests: none.

I would like to thank late researcher Mihai Patrascu for his lecture held at an a training camp for competitive programming contestants many years ago, where amongst other things, he revealed the existence of a deterministic linear time algorithm for solving the Minimum Spanning Tree problem, which worked only when the input size was greater than $10^{80}$. His remark that anything below this size was solvable in $O(1)$ served as an inspiration which ultimately contributed to the discovery of the ideas in this paper.

Warm thanks also to the three beautiful persons who inspired strong interest in solving hard computer science problems in practice. This paper would never have existed without them.

This paper is based on the ideas contained in the rough preprint published by the author in November 2019 \cite{cit26}.

\section*{Annex 1}

The tables below detail the maximum estimated tractable difficulty for the finite complexity classes. They assert 10 MFlop/s for a single-core on commodity hardware (from the author's empirical Codeforces.com experience), 83 TFlops/s for a single-core on super-computer grade hardware, a number of 2 million cores for the fastest super-computer and 60 million for all the TOP500 super-computers combined. 

Data was compiled directly from https://www.top500.org/lists/2019/06/. 

The values for multicore architectures (supercomputers) assume the candidate algorithm can be parallelized perfectly. Furthermore, these bounds are for a classical computer. Where random data is required, depending on its quality, generating one such word (or bit) may take longer than 1 Flop. Also, no similar bounds are provided for a quantum computer.

The bounds are given as follows:
\begin{itemize}
\item Single Core Commodity (SCC).
\item Single Core Super Computer (SCS).
\item Top Super Computer (MCT).
\item Top 500 Super Computers combined (MCA).
\end{itemize}

$\newline$
$
\begin{array}{|l|l|l|l|l|}
\hline
\bf{ExpRank = 1}&{SCC}&{SCS}&{MCT}&{MCA}\\
\hline
1 second&{\bf 16}&{\bf 32}&{\bf 46}&{\bf 50}\\
\hline
1 minute&{\bf 20}&{\bf 36}&{\bf 51}&{\bf 54}\\
\hline
1 hour&{\bf 24}&{\bf 40}&{\bf 55}&{\bf 58}\\
\hline
1 month&{\bf 31}&{\bf 47}&{\bf 61}&{\bf 65}\\
\hline
1 year&{\bf 33}&{\bf 49}&{\bf 64}&{\bf 67}\\
\hline
10 years&{\bf 36}&{\bf 51}&{\bf 66}&{\bf 70}\\
\hline
100 years&{\bf 38}&{\bf 54}&{\bf 68}&{\bf 72}\\
\hline
\end{array}
\newline
$
For the actual $Exp_{n0}$ finite complexity class, where the $ExpRank$ is allowed to reach $8*n0$, divide the values in the above table by 8. 

$\newline$
$
\begin{array}{|l|l|l|l|l|}
\hline
\bf{SemiPoly}&{SCC}&{SCS}&{MCT}&{MCA}\\
\hline
1 second&{\bf 35}&{\bf 179}&{\bf 568}&{\bf 725}\\
\hline
1 minute&{\bf 36}&{\bf 253}&{\bf 761}&{\bf 963}\\
\hline
1 hour&{\bf 86}&{\bf 353}&{\bf 1010}&{\bf 1264}\\
\hline
1 month&{\bf 162}&{\bf 580}&{\bf 1553}&{\bf 1923}\\
\hline
1 year&{\bf 201}&{\bf 693}&{\bf 1818}&{\bf 2241}\\
\hline
10 years&{\bf 245}&{\bf 814}&{\bf 2096}&{\bf 2578}\\
\hline
100 years&{\bf 295}&{\bf 955}&{\bf 2410}&{\bf 2953}\\
\hline
\end{array}
\newline
$

$\newline$
$
\begin{array}{|l|l|l|l|l|}
\hline
\bf{Poly}&{SCC}&{SCS}&{MCT}&{MCA}\\
\hline
1 second&{\bf 342}&{\bf 18*10^{3}}&{\bf 0.5*10^{6}}&{\bf 1*10^{6}}\\
\hline
1 minute&{\bf 1*10^{3}}&{\bf 45*10^{3}}&{\bf 1.2*10^{6}}&{\bf 2.4*10^{6}}\\
\hline
1 hour&{\bf 2.8*10^{3}}&{\bf 115*10^{3}}&{\bf 2.8*10^{6}}&{\bf 5.6*10^{6}}\\
\hline
1 month&{\bf 14*10^{3}}&{\bf 492*10^{3}}&{\bf 11*10^{6}}&{\bf 22*10^{6}}\\
\hline
1 year&{\bf 24*10^{3}}&{\bf 847*10^{3}}&{\bf 19*10^{6}}&{\bf 37*10^{6}}\\
\hline
10 years&{\bf 41*10^{3}}&{\bf 1.4*10^{6}}&{\bf 30*10^{6}}&{\bf 60*10^{6}}\\
\hline
100 years&{\bf 69*10^{3}}&{\bf 2.3*10^{6}}&{\bf 48*10^{6}}&{\bf 95*10^{6}}\\
\hline
\end{array}
\newline
$

$\newline$
$
\begin{array}{|l|l|l|l|l|}
\hline
\bf{Quadric}&{SCC}&{SCS}&{MCT}&{MCA}\\
\hline
1 second&{\bf 3162}&{\bf 9*10^{6}}&{\bf 13*10^{9}}&{\bf 71*10^{9}}\\
\hline
1 minute&{\bf 24*10^{3}}&{\bf 70*10^{6}}&{\bf 100*10^{9}}&{\bf 547*10^{9}}\\
\hline
1 hour&{\bf 190*10^{3}}&{\bf 0.55*10^{9}}&{\bf 0.77*10^{12}}&{\bf 4*10^{12}}\\
\hline
1 month&{\bf 5.1*10^{6}}&{\bf 15*10^{9}}&{\bf 21*10^{12}}&{\bf 114*10^{12}}\\
\hline
1 year&{\bf 18*10^{6}}&{\bf 51*10^{9}}&{\bf 72*10^{12}}&{\bf 396*10^{12}}\\
\hline
10 years&{\bf 56*10^{6}}&{\bf 162*10^{9}}&{\bf 229*10^{12}}&{\bf 1.3*10^{15}}\\
\hline
100 years&{\bf 177*10^{6}}&{\bf 511*10^{9}}&{\bf 723*10^{12}}&{\bf 3.9*10^{15}}\\
\hline
\end{array}
\newline
$

$\newline$
$
\begin{array}{|l|l|l|l|l|}
\hline
\bf{Linear}&{SCC}&{SCS}&{MCT}&{MCA}\\
\hline
1 second&{\bf 10^{7}}&{\bf 10^{14}}&{\bf 10^{20}}&{\bf 10^{21}}\\
\hline
1 minute&{\bf 10^{8}}&{\bf 10^{15}}&{\bf 10^{22}}&{\bf 10^{23}}\\
\hline
1 hour&{\bf 10^{10}}&{\bf 10^{17}}&{\bf 10^{23}}&{\bf 10^{25}}\\
\hline
1 month&{\bf 10^{13}}&{\bf 10^{20}}&{\bf 10^{26}}&{\bf 10^{28}}\\
\hline
1 year&{\bf 10^{14}}&{\bf 10^{21}}&{\bf 10^{27}}&{\bf 10^{29}}\\
\hline
10 years&{\bf 10^{15}}&{\bf 10^{22}}&{\bf 10^{28}}&{\bf 10^{30}}\\
\hline
100 years&{\bf 10^{15}}&{\bf 10^{23}}&{\bf 10^{29}}&{\bf 10^{31}}\\
\hline
\end{array}
\newline
$
The upper bound for the $PolyLog$ finite complexity class is in fact the same as for Linear. As such the above table applies to $PolyLog$ as well.
$\newline$

For $LogRank < log(n)/log(log(n))$ and $Const$ the growth rate allows inputs of almost any practical size to be solved in a very short amount of time, usually within much less than a second. Of course, sometimes in practice the exact $LogRank$ matters - for example when searching for a suitable value within an exponential universe of alternatives.

\end{document}